\newtheorem{theorem}{Theorem}
\newtheorem{lemma}{Lemma}
\newtheorem{definition}{Definition}
\newtheorem{example}{Example}
\DeclareMathOperator{\dist}{dist}
\DeclareMathOperator{\elm}{elm}
\DeclareMathOperator{\sgn}{sgn}
\DeclareMathOperator{\SF}{\sigma}
\title{A Necessary Condition for Connectedness of Solutions to Integer Linear Systems\thanks{%
This work was supported by JSPS KAKENHI Grant Number JP24K14825.}
}
\author{Takasugu Shigenobu\thanks{%
Graduate School of Mathematics, 
Kyushu University.
{\ttfamily shigenobu.takasugu.563@s.kyushu-u.ac.jp}}
\and 
Naoyuki Kamiyama\thanks{%
Institute of Mathematics for Industry, 
Kyushu University.
{\ttfamily kamiyama@imi.kyushu-u.ac.jp}}
}
\date{}
\begin{document}

\maketitle 

\begin{abstract}
An integer linear system is a set of inequalities with integer constraints.
The solution graph of an integer linear system is 
an undirected graph defined on the set of feasible solutions to the integer linear system.
In this graph, a pair of feasible solutions is
connected by an edge if the Hamming distance between them is one.
In this paper, we consider a condition
under which the solution graph is connected for any right-hand side vector.
First, we prove that if the solution graph is connected for any right-hand side vector,
then the coefficient matrix of the system does not contain
some forbidden pattern as a submatrix.
Next, we prove that if at least one of
(i) the number of rows is at most 3, 
(ii) the number of columns is at most 2, 
(iii) the number of rows is 4 and
the number of columns is 3 holds,
then the condition that the coefficient matrix of the system does not contain the forbidden pattern 
is a sufficient condition under which the solution graph is connected for any right-hand side vector.
This result is stronger than a known necessary and sufficient condition 
since the set of coefficient matrix dimensions is strictly larger.

\end{abstract}  

\section{Introduction}

An integer linear system is a set of $m$ inequalities in $n$ variables with integer constraints.
More concretely, in an integer linear system, we are given an 
$m \times n$ real coefficient matrix $A$,
an $m$-dimensional real vector $b$, and a positive integer $d$.
A feasible solution of the integer linear system is an $n$-dimensional integer vector $x \in \{0,1,\dots,d\}^n$ such that $A x \ge b$.
It is known that the Boolean satisfiability problem is 
a special case of the problem of finding a solution to a given integer linear system.
The solution graph of an integer linear system is 
an undirected graph defined on the set of feasible solutions to the integer linear system.
In the solution graph, a pair of feasible solutions is connected by an edge 
if the Hamming distance between them is one.
That is, in the solution graph, 
a pair of feasible solutions that differ at exactly one position is connected.

In this paper, we consider a condition 
under which the solution graph of an integer linear system is connected for any right-hand side vector.
Kimura and Suzuki~\cite{KIMURA202188} proved 
that if the coefficient matrix has an elimination ordering,
then the solution graph is connected  for any right-hand side vector.
(See Section \ref{section:Pre} for the definition of an elimination ordering.)
However, Shigenobu and Kamiyama~\cite{shigenobuCOCOA2023} proved that 
this condition is not a necessary condition in general.
Specifically, Shigenobu and Kamiyama~\cite{shigenobuCOCOA2023} considered 
this problem from the viewpoint of the numbers of rows and columns of the coefficient matrix.
First, they proved that
if $m \ge 4$ and $n \ge 3$,
then there exists a matrix such that
the solution graph of the corresponding linear system is 
connected for any right-hand side vector,
but the matrix does not have an elimination ordering.
In addition, they proved that,
in the case where at least one of 
(i) $m \le 3$, 
(ii) $n \le 2$ holds,
if the solution graph is connected for any right-hand side vector,
then the coefficient matrix has an elimination ordering.
That is, in this case, the condition that the coefficient matrix has an elimination ordering is 
a necessary and sufficient condition 
under which the solution graph is connected for any right-hand side vector.

The goal of this paper is to obtain a necessary and sufficient condition that is stronger 
than that proposed by \cite{shigenobuCOCOA2023}.
First, we prove that if the solution graph is connected for any right-hand side vector,
then the coefficient matrix does not contain
some forbidden pattern as a submatrix,
i.e., we propose a novel necessary condition under which the solution graph is connected for any right-hand side vector.
Next, we prove that if at least one of 
(i) $m \le 3$,
(ii) $n \le 2$, 
(iii) $m = 4$ and $n = 3$
holds,
then the condition that the coefficient matrix does not contain the forbidden pattern 
is a sufficient condition 
under which the solution graph is connected for any right-hand side vector.
This result is stronger than the necessary and 
sufficient condition proposed by \cite{shigenobuCOCOA2023}.
The highlight of the proof of the second result is the proof of the case (iii).
In this case, we prove that 
if the coefficient matrix does not contain the forbidden pattern,
then the sign pattern of the matrix is uniquely determined.
Furthermore, we prove that
the matrix having this sign pattern is connected for any right-hand side vector
in the same way as a similar result 
in \cite{shigenobuCOCOA2023} for a $\{0,\pm 1 \}$-matrix.
Thus, we can obtain the proof of the case (iii) by combining these facts.
It is open whether the necessary condition proposed in this paper 
is a sufficient condition in general.

A solution graph is closely related to a reconfiguration problem~\cite{IDHPSUU11,N18}.
A reconfiguration problem is the problem of deciding which
a given initial solution can be transformed 
into a given target solution by step-by-step operation
(see, e.g.,~\cite{IDHPSUU11,N18}).
In the reconfiguration problem for an integer linear system, 
one of the natural reconfiguration rules
is to change only one element of the solution at each step.
The answer for  
the reconfiguration problem for an integer linear system is yes 
if and only if the target solution is reachable 
from the initial solution in the solution graph.
Thus, if the solution graph is connected,
then the answer for the reconfiguration problem for any pair 
of an initial solution and a target solution is yes.
Kimura and Suzuki~\cite{KIMURA202188} proved that 
the computational complexity of the reconfiguration problem 
for an integer linear system has a trichotomy.
Furthermore, it is known that 
the computational complexity of the reconfiguration problem 
for the Boolean satisfiability problem 
has a dichotomy~\cite{doi:10.1137/07070440X,Schwerdtfeger14}.

\section{Preliminaries}\label{section:Pre}
In this paper, let $\mathbb{R}$ denote the set of real numbers.
For all positive integers $n$, we define $[n] := \{ 1, 2, \ldots , n \}$.

\begin{definition}[Sign function]
    The sign function $\sgn : \mathbb{R} \to \{-1,0,1\}$ is 
    defined 
    as follows. 
    If $x < 0$, then we define $\sgn(x) := -1$. 
    If $x = 0$, then we define $\sgn(x) := 0$. 
    If $x > 0$, then we define $\sgn(x) := 1$. 
\end{definition}

First, we define an integer linear system and its solution graph.
Throughout this paper, we fix a positive integer $d$, and 
we define $D := \{0,1,\dots,d\}$. 
The set $D$ represents the domain of a variable in an integer 
linear system. 

\begin{definition}[Integer linear system]
    An integer linear system has 
    a coefficient matrix $A \in \mathbb{R}^{[m] \times [n]} $ and 
    a vector $b \in \mathbb{R}^{[m]}$.
    This integer linear system is denoted by $(A,b)$.
    A feasible solution to $(A,b)$ is a vector $x \in D^{[n]}$ such that $A x \ge b$. 
    The set of feasible solutions to $(A,b)$ is denoted by $R(A,b)$.
\end{definition}

\begin{definition}[Hamming distance]
    Define the function $\dist \colon \mathbb{R}^{[n]} \times \mathbb{R}^{[n]} \to \mathbb{R}$ by 
    \(
    \dist(x,y) :=  \left| \left\{ j \in \{1,\ldots,n\} : x_j \neq y_j \right\} \right|
    \)
    for all vectors $x,y \in \mathbb{R}^{[n]}$. 
    This function is called 
    the Hamming distance on $\mathbb{R}^{[n]}$.
\end{definition}

\begin{definition}[Solution graph]
    Let $R$ be a subset of $D^{[n]}$.
    We define $V(R) := R$ and $E(R) := \{ \{ x, y \} : x,y \in V(R) , \dist(x, y) = 1 \}$.
    We define the solution graph $G(R)$ 
    as the undirected graph with the vertex set $V(R)$ and the edge set $E(R)$.
    For each integer linear system $(A,b)$, we define $G(A,b) := G(R(A,b))$. 
\end{definition}

In this paper, we consider a condition 
under which the solution graph of an integer linear system 
is connected for any right-hand side vector.
Since we often use this condition,
we name this condition universal connectedness.

\begin{definition}[Universal connectedness]
    Let $A$ be a matrix in $\mathbb{R}^{[m]\times [n]}$.
    The matrix $A$ is said to be universally connected
    if, for all vectors $b \in \mathbb{R}^{[m]}$,
    $G(A,b)$ is connected.
\end{definition}

Next, we define a forbidden pattern and an elimination ordering.
These concepts are related to the sign pattern of a matrix.

\begin{definition}[Forbidden pattern]\label{def:FP} 
    Let $A = (a_{ij})$ be a matrix in $\mathbb{R}^{[m]\times [n]}$.
    Let $(I,J)$ be an element in $2^{[m]} \times 2^{[n]}$ satisfying $|I| = |J| \neq 0$.
    Define $\lambda := |I|$.
    Then $(I,J)$ is called a forbidden pattern in $A$ 
    if there exist orderings
    $I = \{i_1, i_2, \ldots, i_{\lambda}\}$ and $J = \{j_1, j_2, \ldots, j_{\lambda}\}$
    such that 
    \begin{itemize}
    \item
    $a_{ij_{\ell}} = 0$
    for all integers $\ell \in [\lambda]$ and all integers 
    $i \in I \setminus \{i_{\ell},i_{\ell+1}\}$, and 
    \item
    $a_{i_{\ell} j_{\ell}} a_{i_{\ell+1} j_{\ell}} < 0$
    for all integers $\ell \in [\lambda]$, 
    \end{itemize}
    where we define $i_{\lambda+1} := i_1$. 
\end{definition}

By Definition~\ref{def:FP}, if $(I,J)$ is a forbidden pattern, then $|I| = |J| \ge 2$.

\begin{example}
    Assume that a matrix $A \in \mathbb{R}^{[m] \times [n]}$ has a forbidden pattern
    $(I,J) = (\{ i_1, i_2, i_3, i_4 \},$ $ \{ j_1, j_2, j_3, j_4 \}) \in 2^{[m]} \times 2^{[n]}$.
    The submatrix $A^{F} = (a^{F}_{k \ell}) \in \mathbb{R}^{I \times J}$ 
    of $A$ defined by $a^{F}_{k \ell} := a_{i_{k} j_{\ell}}$
    has the following sign pattern 
    \[
    A^{F} = 
    \begin{pmatrix}
        + & 0 & 0 & - \\
        - & + & 0 & 0 \\
        0 & - & + & 0 \\
        0 & 0 & - & +
    \end{pmatrix},
    \]
    or a sign pattern obtained by exchanging the signs of non-zero elements in some columns,
    where $+$ means a positive number and $-$ means a negative number.
\end{example}

Next, we define an elimination ordering.
An elimination ordering is an order of column vectors
defined by this sign of each element of a matrix.

\begin{definition}[Elimination]\label{def:elimination}
    Let $A = (a_{ij})$ be a matrix in $\mathbb{R}^{[m]\times [n]}$.
    Let $j$ be an integer in $[n]$.
    We say that $A$ can be eliminated at the column $j$ if it satisfies at least one of the following conditions.
    \begin{enumerate}[{\rm (i)}]
        \item
        For all integers $i \in [m]$, if $ a_{ij} > 0$, then $a_{ij^{\prime}}=0$ for all integers $j^{\prime} \in [n] \setminus \{j\}$.
        \item
        For all integers $i \in [m]$, if $ a_{ij} < 0$, then $a_{ij^{\prime}}=0$ for all integers $j^{\prime} \in [n]\setminus \{j\}$.
    \end{enumerate}
\end{definition}

\begin{definition}[Eliminated matrix]
    Let $A$ be a matrix in $\mathbb{R}^{[m]\times [n]}$.
    Let $J$ be a subset of $[n]$.
    Then we define the eliminated matrix 
    $\elm(A,J) \in \mathbb{R}^{ [m] \times ([n] \setminus J) }$
    as the submatrix of $A$ whose index set of columns is $[n] \setminus J$.
\end{definition}

\begin{definition}[Elimination ordering]
    Let $A$ be a matrix in $\mathbb{R}^{[m]\times [n]}$.
    Let $S = (j_1,j_2,\dots, j_n)$ be a sequence of 
    integers in $[n]$.
    Then $S$ is called an elimination ordering of $A$ if, for all integers $t \in [n]$, 
    $\elm(A,\{j_1,j_2,\dots,j_{t-1}\})$ can be eliminated at the column $j_t$.
\end{definition}

In this paper, we need the following lemma.
In this lemma, we have to define the following algorithm.

\begin{lemma}[{\cite[Lemma 2]{shigenobuCOCOA2023}}]\label{lem:expansion_lemma}
    Let $A$ be a matrix in $\mathbb{R}^{[m]\times [n]}$.
    Define $A^r$ as a submatrix of $A$ 
    whose index set of columns is $\Delta$ defined by Algorithm~\ref{alg}.
    If $A^r$ is not universally connected,
    then $A$ is not universally connected.
\end{lemma}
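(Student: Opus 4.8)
The plan is to interpret $\Delta$ as the index set of columns that survive the iterative elimination performed by Algorithm~\ref{alg}: there is a sequence of columns $j_1,\dots,j_k$ and matrices $A = A_0, A_1, \dots, A_k = A^r$ with $A_t = \elm(A_{t-1}, \{j_t\})$, where each $j_t$ is a column at which $A_{t-1}$ can be eliminated in the sense of Definition~\ref{def:elimination}. I would then reduce everything to a single-column step and induct, proving the claim: \emph{if $B$ can be eliminated at a column $j$ and $\elm(B,\{j\})$ is not universally connected, then $B$ is not universally connected.} Chaining this implication from $A^r = A_k$ back to $A = A_0$ yields the lemma. (If $\Delta = [n]$ there is nothing to remove; if $A^r$ has no columns then $G(A^r,b')$ is a single vertex for every $b'$, so $A^r$ is universally connected and the hypothesis never fires.)

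\textbf{The single-column step.} Suppose $B = (b_{ij})$ can be eliminated at column $j$, and fix $b'$ with $G(\elm(B,\{j\}),b')$ disconnected. Assume first that condition~(i) of Definition~\ref{def:elimination} holds, so every row $i$ with $b_{ij} > 0$ has all other entries zero. I would build a right-hand side $\beta$ for $B$ that ``pins'' $x_j$ to $0$: for each isolated row $i$ (those with $b_{ij} > 0$) set $\beta_i := 0$, so that $b_{ij} x_j \ge 0$ holds for every $x_j \in D$ and imposes nothing; for every remaining row, which satisfies $b_{ij} \le 0$, set $\beta_i := b'_i$. On the slice $S := \{ x \in R(B,\beta) : x_j = 0 \}$ the isolated rows are vacuous and the coupling rows read exactly $\sum_{j' \ne j} b_{ij'} x_{j'} \ge b'_i$, so deleting coordinate $j$ gives an isomorphism $G(B,\beta)[S] \cong G(\elm(B,\{j\}),b')$. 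The key point is that, since every coupling row has $b_{ij} \le 0$, \emph{decreasing $x_j$ preserves feasibility}: from any feasible $x$ the path $x, x-e_j, x-2e_j, \dots$ lowering $x_j$ to $0$ stays in $R(B,\beta)$ and moves one coordinate at a time, so every vertex is connected to its projection $x^0 \in S$. Writing the disconnection of $G(\elm(B,\{j\}),b')$ as $S = P \sqcup Q$ with no edges between $P$ and $Q$, and lifting to $\widetilde P := \{ x : x^0 \in P \}$ and $\widetilde Q := \{ x : x^0 \in Q \}$, a single-coordinate change either leaves $x^0$ fixed (a change at $j$) or moves $x^0$ by one Hamming step (a change at some $j' \ne j$); in neither case can it cross between $\widetilde P$ and $\widetilde Q$. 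Hence $G(B,\beta)$ is disconnected. Condition~(ii) is handled symmetrically by pinning $x_j$ to its maximum $d$ and using that increasing $x_j$ preserves feasibility there.

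\textbf{Main obstacle.} The routine part is the bookkeeping of $\beta$ on the isolated rows (the zero rows of $\elm(B,\{j\})$ force $b'_i \le 0$, so both sides are vacuous and the feasible sets match) and the verification that projection is a graph isomorphism onto the slice. The step I expect to carry the real content is the feasibility-preserving projection together with the claim that no single-coordinate edge joins $\widetilde P$ to $\widetilde Q$; this is precisely where the elimination hypothesis is used, and it must be checked that the degenerate cases are covered by the same argument. For an all-zero column the slice argument degenerates to recognizing $G(B,\beta)$ as the Cartesian product of $G(\elm(B,\{j\}),b')$ with the complete graph on $D$ (two solutions differing only at $j$ are always at Hamming distance one), which is disconnected iff the first factor is; for the vacuous case, where (i) holds because the column has no positive entry, the projection argument applies verbatim since all entries are $\le 0$. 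Once the single-column step is secured in all these cases, the induction over $j_1,\dots,j_k$ is immediate.
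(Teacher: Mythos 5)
Your proposal addresses a statement that this paper does not actually prove: Lemma~\ref{lem:expansion_lemma} is imported verbatim from \cite[Lemma~2]{shigenobuCOCOA2023}, so there is no in-paper proof to compare against, and your argument has to stand on its own. It does. The reduction to a single elimination step followed by induction along the column sequence $j_1,\dots,j_k$ produced by Algorithm~\ref{alg} is exactly the right skeleton (note that $\elm(A,\{j_1,\dots,j_t\}) = \elm\bigl(\elm(A,\{j_1,\dots,j_{t-1}\}),\{j_t\}\bigr)$, so the chaining is legitimate), and the single-column step is carried out correctly: under condition (i) of Definition~\ref{def:elimination} the rows with $b_{ij}>0$ involve no other variable, so your choice $\beta_i:=0$ on those rows and $\beta_i:=b'_i$ on the rows with $b_{ij}\le 0$ makes the decrease of $x_j$ feasibility-preserving, the slice $\{x_j=0\}$ is isomorphic to $G(\elm(B,\{j\}),b')$ (here you correctly use that disconnectedness forces the eliminated system to have at least two solutions, hence $b'_i\le 0$ on the rows that become zero rows, so no constraint is lost in the identification), and the lifting of the partition $P\sqcup Q$ via the projection $x\mapsto x^0$ correctly rules out crossing edges, since a move in coordinate $j$ fixes $x^0$ and a move in any other coordinate induces an edge of the slice graph. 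The symmetric treatment of condition (ii) by pinning $x_j$ to $d$ is fine. Two cosmetic points only: when $\Delta=\emptyset$ the graph $G(A^r,b')$ may be empty rather than a single vertex (if some $b'_i>0$), but under the convention used in this paper (see the proof of Lemma~\ref{lem:connected_with_unique_pattern_in_general}, where $|R(A,b)|<2$ is declared connected) this changes nothing; and the separate discussion of the all-zero column is unnecessary, since your general pinning argument already covers it. I would expect the proof in \cite{shigenobuCOCOA2023} to be of the same projection-and-lifting type, so your write-up can be viewed as supplying the argument this paper chose to cite rather than reproduce.
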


\begin{algorithm}[ht]
\caption{Algorithm for defining $\Delta$.}
\label{alg}
\begin{algorithmic}[1]    
\STATE $E \leftarrow \emptyset$
\WHILE{$\elm(A,E)$ can be eliminated at some column}
\STATE Find an index $j \in [n] \setminus E$ at which $\elm(A,E)$ can be eliminated.
\STATE $E \leftarrow E \cup \{ j \}$ 
\ENDWHILE
\STATE Output $[n] \setminus E$ as $\Delta$.
\end{algorithmic}
\end{algorithm}

The contribution of this paper is summarized as follows.

\begin{theorem} \label{theorem:necessary}
    Let $A$ be a matrix in $\mathbb{R}^{[m] \times [n]}$.
    If $A$ is universally connected,
    then $A$ does not have a forbidden pattern.
\end{theorem}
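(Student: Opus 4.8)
The plan is to prove the contrapositive: assuming $A$ contains a forbidden pattern $(I,J)$ with $\lambda := |I| = |J| \ge 2$, I will exhibit a single right-hand side $b$ for which $G(A,b)$ is disconnected. Throughout I use the observation that the substitution $x_j \mapsto d - x_j$ is a bijection of $D^{[n]}$ that preserves Hamming distance and merely flips the sign of column $j$ of $A$ while shifting $b$; hence, for every matrix obtained from $A$ by flipping the sign of a column, its solution graphs are isomorphic to those of $A$, so universal connectedness is invariant under such flips and I may freely normalize signs. First I normalize each column $j_\ell \in J$ so that $a_{i_\ell j_\ell} > 0$ and $a_{i_{\ell+1} j_\ell} < 0$ (possible because these two entries have opposite signs by the forbidden-pattern condition), obtaining the canonical cyclic pattern of the Example. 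Writing $\alpha_k := a_{i_k j_k} > 0$ and $-\gamma_k := a_{i_k j_{k-1}} < 0$, row $i_k$ reads $\alpha_k x_{j_k} - \gamma_k x_{j_{k-1}} + (\text{terms in columns outside } J) \ge b_{i_k}$, so the rows indexed by $I$ form a cyclic system of generalized difference constraints.

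Next I make every row outside $I$ irrelevant by setting $b_i$ below the minimum value of its left-hand side over $D^{[n]}$; such a row is satisfied by every vector and therefore never obstructs an edge of the solution graph. It remains to choose $b$ on the rows of $I$ so as to freeze the coordinates in $J$. The core construction is to select a corner point $p$ (for instance $p_{j_\ell} = 0$ for all $\ell$) and set $b_{i_k}$ equal to the left-hand side of row $i_k$ evaluated at $p$, so that every row of $I$ is tight at $p$. Since coordinate $x_{j_\ell}$ occurs in row $i_\ell$ with positive coefficient $\alpha_\ell$ and in row $i_{\ell+1}$ with negative coefficient $-\gamma_{\ell+1}$, increasing $x_{j_\ell}$ by one violates row $i_{\ell+1}$ and decreasing it violates row $i_\ell$; hence no single change of a $J$-coordinate keeps $p$ feasible, and the tuple of $J$-coordinates is an invariant of the connected component of $p$.

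To control the columns outside $J$, I classify each such column by the signs of its entries in the rows of $I$. A column whose $I$-entries are all zero is \emph{free}: changing it never affects the rows of $I$, so it alters neither feasibility at $p$ nor the $J$-coordinates, and such moves stay within the component of $p$. A column with mixed signs among the (tight) rows of $I$ is automatically pinned, since one direction violates a row with positive coefficient and the other a row with negative coefficient. A column whose nonzero $I$-entries all share one sign can, after a sign flip, be assumed nonpositive on $I$ and set to $0$; then increasing it violates a tight row while decreasing it leaves the domain, so it too is pinned. Consequently the component of $p$ consists only of vectors with the same $J$-coordinates as $p$, and any feasible vector with different $J$-coordinates lies in a different component.

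Finally I must produce such a second feasible vector, which is where the difficulty concentrates. For a displacement $\delta := q - p$ with $q$ agreeing with $p$ off $J$, feasibility of $q$ amounts to $\delta_k \ge (\gamma_k/\alpha_k)\,\delta_{k-1}$ for all $k$; composing these around the cycle yields $\delta_1 \ge P\,\delta_1$ with $P := \prod_k \gamma_k/\alpha_k$, so the shape of the feasible ``wedge'' — and hence whether a second, non-adjacent lattice point survives inside the fixed box $D^{[n]}$ — is governed by the cyclic product $P$. The main obstacle is therefore to show, for every value of $P$ (the regimes $P<1$, $P=1$, $P>1$ behaving differently, with $P=1$ forcing an entire tight ``shift'' family) and for the given fixed bound $d$, that a suitable corner and a suitable $b$ always leave at least one further feasible lattice point differing from $p$ in its $J$-coordinates. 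Combined with the freezing and pinning arguments above, this second point lies in a distinct component and witnesses that $G(A,b)$ is disconnected, so a universally connected matrix cannot contain a forbidden pattern.
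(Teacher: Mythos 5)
Your strategy has the same skeleton as the paper's proof: prove the contrapositive, deactivate the rows outside $I$ with very negative right-hand sides, place the $J$-coordinates of a distinguished solution $p$ at a corner, and argue that the component of $p$ cannot change any $J$-coordinate. However, the step you yourself flag as ``the main obstacle'' --- producing a second feasible solution --- is not a deferrable detail; it is the actual content of the theorem, and your specific construction cannot be completed. Making every row of $I$ tight at the corner $p$ can leave $p$ as the \emph{only} solution on the pattern coordinates, so that the feasible set is $p$ together with free-column variations, which is a connected grid; then your $b$ is not a witness of disconnectedness. Worse, there are forbidden patterns for which \emph{every} corner fails under the all-tight rule: take $d=1$ and the normalized pattern $A^F=\bigl(\begin{smallmatrix} 1 & -2 \\ -1/2 & 1 \end{smallmatrix}\bigr)$, i.e.\ $\alpha_1=\alpha_2=1$, $\gamma_1=2$, $\gamma_2=1/2$, so $P=1$ (one of the regimes you flag). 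Tightness at $(0,0)$ forces $x_1\ge 2x_2$ and $x_2\ge x_1/2$, i.e.\ $x_1=2x_2$, whose only point in $\{0,1\}^2$ is $(0,0)$; tightness at $(1,1)$, $(1,0)$, or $(0,1)$ likewise leaves exactly one feasible point. So no corner together with the all-tight $b$ disconnects the graph, and the cyclic-product analysis you sketch cannot be pushed through in the form proposed.

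The paper's key idea, which is exactly what is missing, is to \emph{not} make all rows tight at $p$. With $p_t := d\,\SF(a^F_{tt})$ (the corner) and $q_t := p_t - \sgn(a^F_{tt})$ (one step inward in every coordinate), the paper chooses $b^F$ row by row: row $s$ is tight at $p$ when the off-diagonal entry dominates ($|a^F_{s,s-1}| > |a^F_{ss}|$) and tight at $q$ when the diagonal dominates. Then $p$ and $q$ are both feasible, the non-tight row having slack $\bigl|\,|a^F_{ss}|-|a^F_{s,s-1}|\,\bigr| \ge 0$, while moving coordinate $z$ off the corner changes the left-hand side of row $z$ by at most $-|a^F_{zz}|$, which strictly exceeds that row's slack (the slack is $0$ when the off-diagonal dominates, and is $|a^F_{zz}|-|a^F_{z,z-1}| < |a^F_{zz}|$ when the diagonal dominates). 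Hence $p$ is isolated and $q \neq p$ is feasible, giving disconnectedness in every regime at once; in the example above this rule gives $b^F=(-1,0)$, which indeed works. Two further points you gloss over: your pinning of columns outside $J$ relies on all rows of $I$ being tight at $p$, so it breaks once the correct $b^F$ is used --- the paper instead proves (its Lemma~4, using minimality of the forbidden pattern when $\lambda \ge 3$) that mixed-sign columns outside $J$ do not exist, and then absorbs the remaining same-sign columns via the expansion lemma (Lemma~1); and the case $\lambda = 2$, where that minimality argument is unavailable, is handled separately by quoting the known result for matrices with $m \le 3$ or $n \le 2$ (Lemmas~2 and~3).
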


\begin{theorem} \label{theorem:sufficient}
    Let $A$ be a matrix in $\mathbb{R}^{[m]\times [n]}$.
    Suppose that we have at least one of
    {\rm (i)} $m \le 3$, 
    {\rm (ii)} $n \le 2$, 
    {\rm (iii)} $m = 4$ and $n = 3$.
    If $A$ does not have a forbidden pattern,
    then $A$ is universally connected.
\end{theorem}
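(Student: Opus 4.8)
The plan is to lean on the theorem of Kimura and Suzuki~\cite{KIMURA202188}, that an elimination ordering guarantees universal connectedness, and to reduce every case either to producing such an ordering or to analysing a small ``core'' with no eliminable column. Two observations drive the reduction. First, deleting columns cannot create a forbidden pattern, so the no-forbidden-pattern hypothesis is inherited by every column-submatrix; in particular it is preserved throughout the greedy elimination performed by Algorithm~\ref{alg}. Second, if at every stage of that greedy process the current nonempty submatrix admits an eliminable column, then the sequence of eliminated columns is an elimination ordering of $A$, so $A$ is universally connected by~\cite{KIMURA202188}. Thus the whole theorem reduces to understanding when a matrix with no forbidden pattern can fail to have an eliminable column, and the key point to exploit is that a column fails to be eliminable only if it has both a positive entry and a negative entry, each sharing its row with a further nonzero entry.

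For cases (i) and (ii) I would prove the clean statement that a nonempty matrix with no forbidden pattern and with $m\le 3$ (resp.\ $n\le 2$) always has an eliminable column. For $n\le 2$ this is a short sign-type analysis: writing out the four full-sign row types $(+,+),(+,-),(-,+),(-,-)$, simultaneous non-eliminability of both columns forces one of the pairs $\{(+,+),(-,-)\}$ or $\{(+,-),(-,+)\}$ to be present, i.e.\ a $2\times 2$ forbidden pattern. For $m\le 3$ the same principle is run on the rows: non-eliminability of every column supplies, for each column, a positive and a negative entry each sharing its row with another nonzero, and chasing these dependencies across at most three rows closes either a $2\times 2$ or a $3\times 3$ sign cycle, hence a forbidden pattern. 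Applying this at each greedy step shows the core is empty, so $A$ has an elimination ordering and~\cite{KIMURA202188} finishes these cases.

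For case (iii) the first step is to dispose of matrices with an eliminable column: eliminating one column of a $4\times 3$ matrix with no forbidden pattern yields a $4\times 2$ matrix with no forbidden pattern, which by case (ii) has an elimination ordering, and prepending the eliminated column gives an elimination ordering of $A$. Hence it remains to treat a $4\times 3$ matrix with no forbidden pattern and no eliminable column, which is exactly its own core. Here, following the paper's stated strategy, the plan is to show by a finite enumeration over the $4\times 3$ sign matrices that these two constraints pin down the sign pattern uniquely up to permuting rows, permuting columns, and flipping the signs within a column. These symmetries preserve universal connectedness — a column sign flip corresponds to the substitution $x_j \mapsto d - x_j$, a Hamming-distance-preserving bijection of $D$ — so it suffices to treat one representative pattern. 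I expect this enumeration to be the main obstacle: one must rule out every arrangement that either leaves a column eliminable or closes a sign cycle of length $2$ or $3$, and then verify that the surviving patterns are all equivalent under the stated symmetries.

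Finally, for the unique surviving sign pattern I would prove universal connectedness directly, for arbitrary real magnitudes consistent with that pattern and arbitrary $b$, by the path-construction argument used for $\{0,\pm 1\}$-matrices in~\cite{shigenobuCOCOA2023}: exhibit single-coordinate moves that drive any feasible solution toward a canonical feasible solution while preserving feasibility, so that $G(A,b)$ is connected. The delicate point, relative to the $\{0,\pm 1\}$ case, is checking that these moves stay feasible when the nonzero entries have arbitrary magnitudes rather than unit size; the sign structure forced in the previous step is precisely what makes this verification go through. Combining the eliminable and non-eliminable subcases then yields case (iii) and completes the proof.
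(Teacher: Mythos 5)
Your proposal is correct and follows essentially the same route as the paper's proof: for cases (i) and (ii) you show that the absence of a forbidden pattern forces an elimination ordering and then invoke Lemma~\ref{lem:EO->UC} (the paper proves the contrapositive, Lemma~\ref{lem:m<3_and_n<2_not_EO->FP}, via the core left by Algorithm~\ref{alg}), and for case (iii) you dispose of matrices with an eliminable column by reduction to case (ii), pin down the sign pattern of the remaining core by a finite sign analysis, and prove that pattern universally connected by an explicit path construction, which is exactly the content of Lemmas~\ref{3x4_EO_and_not_elm->FP}, \ref{lem:unique_pattern_dicided} and \ref{lem:connected_with_unique_pattern_in_general}. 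The only cosmetic differences are that you admit column sign flips ($x_j \mapsto d - x_j$) as an extra symmetry, whereas the paper shows row and column permutations alone suffice, and that the two finite case analyses you correctly flag as the remaining work are carried out in full in the paper.
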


\section{Proof of Theorem~\ref{theorem:necessary}}

In this section, we prove Theorem~\ref{theorem:necessary}.
Define the function 
$\SF: \mathbb{R} \to \mathbb{R}$ by
\[
    \SF(a) := 
    \frac{1 + \sgn(a)}{2} 
\]
for all real numbers $a \in \mathbb{R}$.
If $a > 0$ (resp. $a < 0$), then $\SF(a) = 1$ (resp. $\SF(a) = 0$).
In this proof, we need the following lemmas.

\begin{lemma}[{\cite[Theorem 2]{shigenobuCOCOA2023}}]\label{lem:m<3_and_n<2_not_EO_->_not_UC}
    Let $A$ be a matrix in $\mathbb{R}^{[m] \times [n]}$.
    Suppose that we have at least one of
    {\rm (i)} $m \le 3$, 
    {\rm (ii)} $n \le 2$.
    If $A$ does not have an elimination ordering,
    then $A$ is not universally connected.
\end{lemma}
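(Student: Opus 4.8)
The plan is to prove the contrapositive constructively: starting from a matrix $A$ without an elimination ordering, I would exhibit an explicit right-hand side $b$ for which $G(A,b)$ is disconnected. First I would invoke Lemma~\ref{lem:expansion_lemma}: running Algorithm~\ref{alg} replaces $A$ by its core submatrix $A^r$ on the column set $\Delta$, and it suffices to disconnect $A^r$. Since $A$ has no elimination ordering, $\Delta \neq \emptyset$, and by construction no column of $A^r$ can be eliminated; moreover $A^r$ inherits the dimension bound ($m \le 3$, or $n \le 2$). The key reformulation I would record is that, by Definition~\ref{def:elimination}, ``column $j$ is not eliminable'' means $j$ carries both a positive entry and a negative entry, each lying in a row with a second nonzero; in particular every column of $A^r$ has nonzeros in at least two distinct rows, one of each sign.

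The heart of the argument is to extract a forbidden pattern (Definition~\ref{def:FP}) of size at most $3$ from $A^r$. For $n \le 2$ the core has exactly two columns (a single column is always eliminable), and a short sign-type analysis of the condition that both columns carry both signs forces two rows whose $2 \times 2$ submatrix has opposite signs in each column, i.e.\ a $\lambda = 2$ pattern. For $m \le 3$ I would argue by counting opposite-sign links: call a column \emph{opposite across} a row-pair $\{p,q\}$ if its two entries there have product $<0$. Every column is opposite across at least one pair (a two-row column across its own pair; a three-row column across exactly two pairs). If some pair is witnessed by two columns, we already have a $\lambda = 2$ pattern; otherwise each of the three pairs has at most one witness, so the total number of opposite-links is at most $3$, forcing at most three columns, none of which can be a three-row column (those would contribute two links each). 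The columns are then forced to sit one per pair as two-row opposite columns, which is exactly a $\lambda = 3$ forbidden pattern. Thus in every case $A^r$ contains a forbidden pattern $(I,J)$ with $|I| = |J| \le 3$.

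Given such a pattern, I would disconnect $A^r$ as follows. Fix an integer point $x^\ast \in D^{[n]}$ and set $b_i := (A^r x^\ast)_i$ for every pattern row $i \in I$ (making them tight) and $b_i := \min_{x \in D^{[n]}} (A^r x)_i$ for the remaining rows (making them slack). The defining property of the pattern is that each pattern column $j_\ell$ meets exactly the two tight rows $i_\ell, i_{\ell+1}$ with opposite signs, so any unit change of $x_{j_\ell}$ violates one of them; any non-pattern column (which exists only in the $m \le 3$ case) still meets a tight pattern row, since it touches at least two of the $\le 3$ rows, so placing its $x^\ast$-value at the appropriate boundary $0$ or $d$ blocks both unit moves. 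Hence $x^\ast$ is isolated in $G(A^r,b)$. Finally, the cyclic opposite-sign structure lets me trade two pattern columns against each other so as to keep all tight rows satisfied, producing a second feasible solution at Hamming distance at least $2$ from $x^\ast$; since $x^\ast$ is isolated, the two lie in different components, so $G(A^r,b)$ is disconnected. By Lemma~\ref{lem:expansion_lemma}, $A$ is not universally connected.

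The step I expect to be the main obstacle is making this disconnection robust over the \emph{fixed finite} domain $D = \{0,\dots,d\}$: the natural trade direction attached to the pattern may have entries exceeding $d$, so for small $d$ one cannot simply step along it. The remedy is to choose $b$ so that a linear functional coming from the pattern rows traps at least two mutually non-adjacent feasible points inside the box (as one can already see for $\lambda = 2$ and $d = 1$), which requires a careful, case-dependent accounting of which integer points fall in the carved-out band. Verifying the $m = 3$ counting dichotomy and this finite-domain trapping are where the real work lies; the remainder is bookkeeping.
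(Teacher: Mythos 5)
A preliminary remark on the comparison itself: the paper never proves this lemma --- it is imported as an external result (Theorem~2 of the cited COCOA~2023 paper), and the present paper's own machinery is built around it. So your proposal can only be compared with the role the lemma plays in the paper. Your architecture (reduce to the core $A^r$ via Algorithm~\ref{alg} and Lemma~\ref{lem:expansion_lemma}, extract a forbidden pattern of size at most $3$, then disconnect with an explicit $b$) mirrors the paper's own related machinery: your extraction step is essentially Lemma~\ref{lem:m<3_and_n<2_not_EO->FP}, and your disconnection step is what the proof of Theorem~\ref{theorem:necessary} does for $\lambda \ge 3$. The extraction half of your argument is essentially sound, with two fixable gaps in the $m=3$ count: excluding a three-row column does not follow from ``it contributes two links'' --- you must again invoke non-eliminability (a column with signs $(+,+,-)$ whose companion column is supported on the pair $\{1,2\}$ satisfies condition (ii) of Definition~\ref{def:elimination} and is therefore eliminable); and ``one column per pair'' additionally requires ruling out cores with only one or two columns, again via eliminability.

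The genuine gap is the disconnection step, and it is fatal as written. Take $m=n=2$, $d=1$, and
\[
A=\begin{pmatrix} 3 & -1 \\ -3 & 1 \end{pmatrix},
\]
which has no elimination ordering and whose two rows and two columns form a $\lambda=2$ forbidden pattern. Your recipe makes every pattern row tight at a chosen $x^{\ast}$, i.e.\ $b:=Ax^{\ast}$. But here the second row is the negative of a positive multiple of the first, so $Ax\ge Ax^{\ast}$ forces $3x_1-x_2=3x^{\ast}_1-x^{\ast}_2$, and inside $\{0,1\}^2$ the unique solution is $x=x^{\ast}$, for every choice of $x^{\ast}$. Thus your $G(A,b)$ is always a single vertex, hence connected: isolation holds, but the promised second feasible solution obtained by ``trading two pattern columns'' can never exist while both rows stay tight. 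The matrix \emph{is} disconnectable --- $b=(0,-2)$ leaves exactly the non-adjacent pair $(0,0)$, $(1,1)$ feasible --- but only by letting the two tight rows be tight at \emph{different} points. That is precisely the device in the paper's proof of Theorem~\ref{theorem:necessary}: the right-hand side $b^F$ is defined row by row, rows with dominant diagonal entry being tight at the inner point $q$ and the remaining rows tight at the corner $p$, after which both $p$ and $q$ are shown feasible and $p$ isolated, for every $d$ and arbitrary magnitudes. Your final paragraph correctly identifies this finite-domain/magnitude issue as the main obstacle, but the proposed ``band trapping'' remedy is left entirely as a sketch; since that step is where the whole content of the theorem lies (the rest being sign bookkeeping), the proposal does not constitute a proof.
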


\begin{lemma}\label{lem:FP->notEO}
    Let $A$ be a matrix in $\mathbb{R}^{[m]\times [n]}$.
    If $A$ has a forbidden pattern, 
    then $A$ does not have an elimination ordering.
\end{lemma}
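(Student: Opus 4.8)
The plan is to argue by contradiction: I assume that $A$ has a forbidden pattern $(I,J)$ and also admits an elimination ordering, and I derive a contradiction by showing that the first column of $J$ to be eliminated cannot in fact be eliminated. Fix orderings $I = \{i_1, \dots, i_\lambda\}$ and $J = \{j_1, \dots, j_\lambda\}$ as in Definition~\ref{def:FP}, and extend the indices cyclically modulo $\lambda$ (so $i_{\lambda+1} := i_1$, $j_0 := j_\lambda$, and $j_{\lambda+1} := j_1$). The two defining properties say that, within the pattern, column $j_\ell$ has nonzero entries only in rows $i_\ell$ and $i_{\ell+1}$, and that these two entries have opposite signs; equivalently, each row $i_\ell$ has nonzero pattern entries in exactly the two columns $j_{\ell-1}$ and $j_\ell$.

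Now suppose for contradiction that $S = (k_1, \dots, k_n)$ is an elimination ordering of $A$. Let $t$ be the smallest index with $k_t \in J$, write $k_t = j_\ell$, and set $E := \{k_1, \dots, k_{t-1}\}$. By minimality of $t$ we have $E \cap J = \emptyset$, so every column of $J$ is still present in $\elm(A,E)$. Since $S$ is an elimination ordering, $\elm(A,E)$ must be eliminable at column $j_\ell$, hence it satisfies condition (i) or condition (ii) of Definition~\ref{def:elimination}; I will show that both fail.

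The key observation is that, by the opposite-sign property, exactly one of the two pattern entries $a_{i_\ell j_\ell}$ and $a_{i_{\ell+1} j_\ell}$ is positive and the other is negative. The positive one lies in a row that also carries a nonzero pattern entry in an adjacent column of $J$: namely in $j_{\ell-1}$ if the positive row is $i_\ell$, or in $j_{\ell+1}$ if it is $i_{\ell+1}$. Since $\lambda \ge 2$, this adjacent column is distinct from $j_\ell$, and because it lies in $J$ it still belongs to $[n] \setminus E$. Thus there is a row with a positive entry in column $j_\ell$ and a nonzero entry in another surviving column, so condition (i) fails. Applying the same reasoning to the negative entry shows that condition (ii) fails as well. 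Therefore $\elm(A,E)$ cannot be eliminated at $j_\ell$, contradicting the assumption that $S$ is an elimination ordering.

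I expect the only delicate part to be the bookkeeping with the cyclic indices, together with confirming that the chosen adjacent ``witness'' column is genuinely different from $j_\ell$ and has not yet been eliminated; the sign information supplied by the forbidden pattern does all the essential work, since it guarantees both a positive and a negative entry in column $j_\ell$, each sitting in a row that remains nonzero in a surviving column of $J$.
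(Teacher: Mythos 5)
Your proof is correct, and it takes a genuinely different route from the paper's. You argue directly: taking the first column $j_\ell$ of $J$ to appear in a hypothetical elimination ordering, you note that its only nonzero entries within the rows of $I$ are $a_{i_\ell j_\ell}$ and $a_{i_{\ell+1} j_\ell}$, of opposite signs, and that each of these two rows carries a nonzero entry in an adjacent column of the cycle ($j_{\ell-1}$ for row $i_\ell$, $j_{\ell+1}$ for row $i_{\ell+1}$), which is distinct from $j_\ell$ since $\lambda \ge 2$ and still present in $\elm(A,E)$ because $j_\ell$ was chosen first among $J$; hence both conditions (i) and (ii) of Definition~\ref{def:elimination} fail at $j_\ell$, a contradiction. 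The paper instead fixes a forbidden pattern minimizing $|I|$, handles $\lambda = 2$ by hand, and for $\lambda \ge 3$ derives from eliminability the disjunction that the positive row $i^+$ or the negative row $i^-$ vanishes on $J \setminus \{j\}$, then builds from either case a strictly smaller forbidden pattern $(I',J')$, contradicting minimality. Your single observation is in fact sharper: it shows that neither case of that disjunction can occur at all, because whichever row carries the positive (resp.\ negative) entry in column $j$ is forced by the pattern to be nonzero on an adjacent column of $J$. Consequently your proof is uniform in $\lambda$, needs no minimal pattern and no descent, and isolates the structural reason behind the lemma; the paper's minimality-based argument buys nothing extra here beyond stylistic consistency with its other proofs that do need minimal patterns (e.g., Lemma~\ref{lem:notJ->elm_able}).
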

\begin{proof}
Assume that $A = (a_{ij})$ has a forbidden pattern.
Let $(I,J)$ be a forbidden pattern that minimizes $|I| \,(= |J|)$.
We define $\lambda := |I|$.

First, we consider the case where $\lambda = 2$.
We assume that $I = \{i_1,i_2\}$ and 
$J = \{j_1,j_2\}$.
Definition \ref{def:FP} implies that
$a_{i_1 j_1} a_{i_2 j_1} < 0$ and $a_{i_1 j_2} a_{i_2 j_2} < 0$.
Thus, $A$ cannot be eliminated at the columns $j_1$ or $j_2$.

Next, we consider the case where $\lambda \ge 3$.
In order to prove this case by contradiction,
we assume that $A$ has an elimination ordering.
Fix an elimination ordering of $A$.
Let $j$ be the first column in $J$ 
in the elimination ordering.
By Definition \ref{def:FP},
there exist distinct integers $i^+,i^- \in I$ 
such that $a_{i^+ j} > 0$ and $a_{i^- j} < 0$.
By the definition of $j$,
we have at least one of
(i) $a_{i^+ j^{\prime}} = 0$ for all integers $j^{\prime} \in J \setminus \{j\}$, 
(ii) $a_{i^- j^{\prime}} = 0$ for all integers $j^{\prime} \in J \setminus \{j\}$.

Assume that the case (i) (resp. the case (ii)) holds.
We define $J^{\prime} := J \setminus \{j\}$ 
and $I^{\prime} := I \setminus \{i^+\}$
(resp.\, $I^{\prime} := I \setminus \{i^-\}$).
For all integers $j^{\prime} \in J^{\prime}$, 
there exist distinct integers $i^{\prime}_1, i^{\prime}_2 \in I^{\prime}$ 
such that $a_{i^{\prime}_1 j^{\prime}} a_{i^{\prime}_2 j^{\prime}} < 0$
since $(I,J)$ is a forbidden pattern
and $a_{i^+ j^{\prime}} = 0$ (resp.\, $a_{i^- j^{\prime}} = 0$)
for all integers $j^{\prime} \in J^{\prime}$.
Thus, $(I^{\prime}, J^{\prime})$
is a forbidden pattern in $A$.
In addition, we have $|J^{\prime}| < |J|$ and $|I^{\prime}| < |I|$.
This contradicts the minimality of $(I,J)$.
This completes the proof.
\end{proof}

Here, we prove the contraposition of Theorem~\ref{theorem:necessary}.
That is, we prove that if $A$ has a forbidden pattern,
then $A$ is not universally connected. 

Let $(I,J) = (\{i_1, i_2, \ldots, i_{\lambda} \}, \{j_1, j_2, \ldots, j_{\lambda}\})$ 
be a forbidden pattern in $A$ that minimizes $|I| \, (=|J|)$,
where $\lambda := |I| \, (= |J|)$.
Notice that $\lambda \ge 2$ follows from Definition \ref{def:FP}.
Define $A^I = (a^I_{ij}) \in \mathbb{R}^{I \times [n]}$ 
as the submatrix of $A$ whose index set of rows is $I$.

First, we assume that $\lambda = 2$.
Lemma \ref{lem:FP->notEO} implies that $A^I$ does not have an elimination ordering.
Thus, Lemma \ref{lem:m<3_and_n<2_not_EO_->_not_UC}
implies that
there exists a vector $b^I \in \mathbb{R}^{[2]}$ such that $G(A^I,b^I)$ is not connected.
Define the vector $b \in \mathbb{R}^{[m]}$ 
by 
\begin{equation}
      b_k := \left\{
\begin{aligned}
&b^I_k && (k \in I) \\
&- dn\alpha_k && (k \notin I)
,\end{aligned}
\right.  
\label{eq:def_b_using_b^I}
\end{equation}
where we define 
$ \alpha_k := \max_{j \in [n]}|a_{k j}| $ for all integers $k \in [m]$.
It is not difficult to see that $G(A,b)$ is not connected.

Next, we assume that $\lambda \ge 3$.
First, we prove the following lemma.

\begin{lemma}\label{lem:notJ->elm_able}
    For all integers $i_s,i_t \in I$ and 
    $j \in [m] \setminus J$, we have 
    $a^I_{i_s j} a^I_{i_t j} \ge 0$.
\end{lemma}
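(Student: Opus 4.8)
The plan is to prove the statement by contradiction, exploiting the minimality of $(I,J)$. (I read the index condition as $j \in [n] \setminus J$, since $J \subseteq [n]$ indexes columns.) Suppose, toward a contradiction, that there exist a column $j \in [n] \setminus J$ and two rows $i_s, i_t \in I$ with $a_{i_s j} a_{i_t j} < 0$. The idea is to use the column $j$ as a ``chord'' of the cyclic structure encoded by the forbidden pattern $(I,J)$: in the orderings $i_1, \ldots, i_\lambda$ and $j_1, \ldots, j_\lambda$, each column $j_\ell$ has, within $I$, exactly two nonzero entries, in rows $i_\ell$ and $i_{\ell+1}$, and these have opposite signs. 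Thus the rows of $I$ form a cycle $i_1 - i_2 - \cdots - i_\lambda - i_1$ whose edges are the columns of $J$, and a chord splitting this cycle into two shorter cycles should yield a smaller forbidden pattern, contradicting the minimality of $\lambda = |I|$.

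First I would identify the right chord. Let $i_{k_1}, i_{k_2}, \ldots, i_{k_r}$, listed in the cyclic order induced by the forbidden-pattern ordering, be the rows of $I$ in which column $j$ is nonzero; by assumption $r \ge 2$ and the signs $\sgn(a_{i_{k_1} j}), \ldots, \sgn(a_{i_{k_r} j})$ do not all coincide. Walking around the cyclic list $\sgn(a_{i_{k_1} j}), \ldots, \sgn(a_{i_{k_r} j}), \sgn(a_{i_{k_1} j})$, a sign change must occur between some cyclically consecutive pair; hence there are consecutive nonzero rows $i_{k_p}, i_{k_{p+1}}$ with $a_{i_{k_p} j} a_{i_{k_{p+1}} j} < 0$. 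Because these are consecutive among the nonzero rows of $j$, one of the two arcs of the cycle joining them contains no other row in which $j$ is nonzero, and I would take that arc.

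Next I would assemble the new pattern. Let $I' := \{ i_{k_p}, i_{k_p + 1}, \ldots, i_{k_{p+1}} \}$ be the rows along the chosen arc and $J' := \{ j_{k_p}, j_{k_p + 1}, \ldots, j_{k_{p+1} - 1} \} \cup \{ j \}$ the corresponding columns of $J$ together with the chord $j$, so that $|I'| = |J'|$. Ordering the rows along the arc and the columns as $j_{k_p}, \ldots, j_{k_{p+1}-1}, j$, I would verify the two conditions of Definition~\ref{def:FP}: for each original column $j_\ell \in J'$, its only nonzeros within $I' \subseteq I$ lie at the consecutive rows $i_\ell, i_{\ell+1}$, with opposite signs, by restricting the hypotheses on $(I,J)$; and for the chord $j$, its only nonzeros within $I'$ lie at the two endpoints $i_{k_p}, i_{k_{p+1}}$, by the arc selection, and these close the cycle with opposite signs by construction. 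Hence $(I', J')$ is a forbidden pattern.

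Finally I would check that $(I', J')$ is strictly smaller. When $r \ge 3$ the chosen arc omits at least one other nonzero row of $j$, so $|I'| < \lambda$ at once. When $r = 2$ both arcs avoid other nonzero rows of $j$, so I would take the shorter arc; the two resulting cycles have row-counts summing to $\lambda + 2$, so the shorter has at most $\lfloor \lambda/2 \rfloor + 1$ rows, which is less than $\lambda$ because $\lambda \ge 3$. In every case $2 \le |I'| < \lambda$, contradicting the minimality of $|I|$. The main obstacle is precisely this chord-selection step: a careless choice of $i_s, i_t$ may leave extra nonzeros of $j$ inside the arc, violating the ``two nonzeros per column'' requirement of a forbidden pattern; choosing cyclically consecutive nonzero rows of opposite sign is what makes the construction valid.
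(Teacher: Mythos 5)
Your proof is correct and takes essentially the same approach as the paper's: both argue by contradiction, select a sign-change pair in column $j$ with no other nonzero entries of $j$ between the two rows, and splice $j$ in as a chord of the cycle to build a strictly smaller forbidden pattern, contradicting minimality of $(I,J)$. The only difference is cosmetic: the paper picks the pair minimizing the linear index distance $|s-t|$ and treats the wrap-around case $t=1$, $s=\lambda$ separately via the $2\times 2$ pattern $(\{i_1,i_\lambda\},\{j_\lambda,j\})$, whereas your cyclically-consecutive selection (together with taking the shorter arc when $r=2$) handles both cases in one construction.
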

\begin{proof}
We assume that there exist distinct integers $i^{\prime}, i^{\prime \prime} \in I$ and
$j \in [m] \setminus J$
such that $a^I_{i^{\prime} j} a^I_{i^{\prime \prime} j} < 0$.
Define $I^{\pm} := \{ (x,y) \in [\lambda] \times [\lambda] : a^I_{i_x j} a^I_{i_y j} < 0 \}$.
Then, fix $(s,t) \in I^{\pm}$ 
such that $|s- t| \le |x-y|$ for all pairs of integers $(x,y) \in I^{\pm}$.
Without loss of generality, we assume that $s>t$.
If there exists an integer $k \in \{t+1,\ldots,s-1\}$
such that $a_{i_k j} \neq 0$,
then at least one of $a^I_{i_s j} a^I_{i_k j} < 0$, 
$a^I_{i_t j} a^I_{i_k j} < 0$ holds.
This contradicts the minimality of $|s-t|$.
Hence, $a_{i_k j} = 0$ for all integers $k \in \{t+1,\ldots,s-1\}$.
Recall that we consider the case where $\lambda \ge 3$.

First, we consider the case where $t=1$ and $s=\lambda$.
We have $a^I_{i_s j_{\lambda}} a^I_{i_t j_{\lambda}} < 0$.
Define $I^{\prime} := \{i_s,i_t\}$
and $J^{\prime} := \{j_{\lambda}, j\}$.
Then $(I^{\prime}, J^{\prime})$
is a forbidden pattern in $A$.
This contradicts the minimality of $(I, J)$.

Next, we consider the case where at least one of $t \neq 1$, 
$s \neq \lambda$ holds.
Define $I^{\prime} := \{i_t, i_{t+1} \ldots, i_s\}$
and $J^{\prime} := \{j_t, j_{t+1}, \ldots, j_{s-1}, j\}$.
By the definition of $(I,J)$, 
for all integers $\ell \in \{t,\ldots,s-1\}$, 
$a_{i_{\ell} j_{\ell}} a_{i_{\ell+1} j_{\ell}} < 0$ and 
$a_{i j_{\ell}} = 0$,
where $i \in I^{\prime} \setminus \{ i_{\ell}, i_{\ell+1} \}$ is an arbitrary integer.
In the column $j$, $a^I_{i_s j} a^I_{i_t j} < 0$ and 
$a_{i_k j} = 0$ for all integers $k \in \{t+1,\ldots,s-1\}$.
Then $(I^{\prime}, J^{\prime})$
is a forbidden pattern in $A$.
This contradicts the minimality of $(I, J)$.
This completes the proof.
\end{proof}

Define $A^F \in \mathbb{R}^{I \times J}$ as the submatrix of $A^I$ whose index set of columns is $J$.
We reorder the columns and the rows of $A^F$ in such a way that
\(
A^F = (a^F_{s t}) = (a_{i_s j_t}).
\)
We define $a^F_{0 0} := a^F_{\lambda \lambda}$ and 
$a^F_{s 0} := a^F_{s \lambda}$ for all integers $s \in [\lambda]$.

We define the vector $b^F \in \mathbb{R}^{[\lambda]}$ by
\[
b^F_s := 
\left\{
\begin{aligned}
    &\sum_{k \in [\lambda]} a^F_{s k} d 
    \SF(a^{F}_{kk})
    && \left( \text{if } \max_{t \in [\lambda]}|a^{F}_{st}| \neq |a^{F}_{ss}| \right) \\
    & \sum_{k \in [\lambda]} a^F_{s k} \left( d
    \SF(a^{F}_{kk}) - \sgn(a^{F}_{kk}) \right)
    && \left( \text{if } \max_{t \in [\lambda]}|a^{F}_{st}| = |a^{F}_{ss}| \right) .
\end{aligned}
\right.
\]
Notice that $\max_{t \in [\lambda]}|a^{F}_{st}| \neq |a^{F}_{ss}|$ implies 
that 
$\max_{t \in [\lambda]}|a^{F}_{st}| = |a^{F}_{s ,s-1}|$
for all integers $s \in [\lambda]$.
Then we consider the integer linear system $(A^F,b^F)$.

We define the vectors $p,q \in D^{[\lambda]}$ by
\begin{align*}
    p_t &:= d \SF(a^{F}_{tt}), \qquad \\
    q_t &:= d \SF(a^{F}_{tt}) - \sgn(a^{F}_{tt})
\end{align*}
for all integers $t \in [\lambda]$.
We prove that $p,q \in R(A^F,b^F)$ and $p,q$ are not connected
on $G(A^F, b^F)$.

We consider $p$. 
We take an arbitrary integer $s \in [\lambda]$.
If $\max_{t \in [\lambda]}|a^{F}_{s t}| \neq |a^{F}_{s s}|$, 
then $\sum_{k \in [\lambda]} a^F_{s k} p_k - b^F_s = 0$.
We assume that $\max_{t \in [\lambda]}|a^{F}_{s t}| = |a^{F}_{s s}|$.
Then  
\begin{align*}
     \sum_{k \in [\lambda]} a^F_{s k} p_k - b^F_s 
    &= \sum_{k \in [\lambda]} a^F_{s k} 
    d \SF(a^{F}_{kk})
    - \sum_{k \in [\lambda]} a^F_{s k} \left( d
    \SF(a^{F}_{kk}) - \sgn(a^{F}_{kk}) \right)  \\
    &= \sum_{k \in [\lambda]} a^F_{s k} 
    \sgn(a^{F}_{kk}) \\
    &= a^F_{s s}\sgn(a^{F}_{ss}) + a^F_{s, s-1}\sgn(a^{F}_{s-1, s-1}) \\
    &= |a^F_{s s}| - |a^F_{s, s-1}| \ge 0 .
\end{align*}
This completes the proof of $p \in R(A^F,b^F)$.

We consider $q$. 
We take an arbitrary integer $s \in [\lambda]$.
If $\max_{t \in [\lambda]}|a^{F}_{s t}| = |a^{F}_{s s}|$, 
then $\sum_{k \in [\lambda]} a^F_{s k} q_k - b^F_s = 0$.
Assume that $\max_{t \in [\lambda]}|a^{F}_{s t}| \neq |a^{F}_{s s}|$.
Then  
\begin{align*}
     \sum_{k \in [\lambda]} a^F_{s k} q_k - b^F_s 
    &= \sum_{k \in [\lambda]} a^F_{s k} \left( d
    \SF(a^{F}_{kk}) - \sgn(a^{F}_{kk}) \right)
    - \sum_{k \in [\lambda]} a^F_{s k} 
    d \SF(a^{F}_{kk})  \\
    &= \sum_{k \in [\lambda]} - a^F_{s k} 
    \sgn(a^{F}_{kk}) \\
    &= - a^F_{ss}\sgn(a^{F}_{ss}) - a^F_{s, s-1}\sgn(a^{F}_{s-1, s-1}) \\
    &= - |a^F_{ss}| + |a^F_{s, s-1}| \ge 0.
\end{align*}
This completes the proof of $q \in R(A^F, b^F)$.

We define $Y$ as the set of vectors $y \in D^{[n]}$ such that $\dist(p,y) = 1$.
In other words, $Y$ is the set of 
neighborhood vertices of $p$ on $G(D^{[n]})$.
Then, we prove that $y \notin R(A^F,b^F)$ for all vectors $y \in Y$.

We arbitrarily take a vector $y \in Y$.
By the definition of $Y$,
there exists an integer $z \in [\lambda]$ such that $y_k = p_k$ for all integers $k \in [\lambda] \setminus \{z\} $ and $y_z = \xi$,
where $\xi$ is some integer in $D$ such that $\xi \neq p_{z}$.

Assume that $\max_{t \in [\lambda]}|a^{F}_{zt}| \neq |a^{F}_{zz}|$.
We have 
\begin{align*}
    \sum_{k \in [\lambda]} a^F_{z k} y_k - b_{z} 
    &= \sum_{k \in [\lambda]} a^F_{z k} y_k -  
    \sum_{k \in [\lambda]} a^F_{z k} 
    d \SF(a^{F}_{kk}) 
    = a^F_{z z} \left( \xi - d \SF(a^{F}_{zz}) \right).
\end{align*}

If $a^F_{z z} > 0$, then $p_z = d \SF(a^{F}_{zz}) = d$.
Since $\xi \neq p_z$, the inequality 
$0 \le \xi \le d-1$ is obtained. 
We have
\begin{align*}
    \sum_{k \in [\lambda]} a^F_{z k} y_k - b_{z} 
    = a^F_{z z}  \left( \xi - d \SF(a^{F}_{zz}) \right)
    \le a^F_{z z} ((d - 1) - d) 
    = - a^F_{z z} < 0.
\end{align*}

If $a^F_{z z} < 0$, then $p_z = d \SF(a^{F}_{zz}) = 0$.
Since $\xi \neq p_z$, the inequality 
$1 \le \xi \le d$ is obtained.  
We have
\begin{align*}
    \sum_{k \in [\lambda]} a^F_{z k} y_k - b_{z} 
    = a^F_{z z}  \left( \xi - d \SF(a^{F}_{zz}) \right)
    \le a^F_{z z} (1 - 0) 
    = a^F_{z z} < 0.
\end{align*}

Assume that $\max_{t \in [\lambda]}|a^{F}_{zt}| = |a^{F}_{zz}|$.
Then we have 
\begin{align*}
    \sum_{k \in [\lambda]} a^F_{z k} y_k - b_{z}
    &= \sum_{k \in [\lambda]} a^F_{z k} y_k 
    - \sum_{k \in [\lambda]} a^F_{z k} \left( d
    \SF(a^{F}_{kk}) - \sgn(a^{F}_{kk}) \right) \\
    &= a^F_{z z} \left(\xi - \left( d
    \SF(a^{F}_{zz}) - \sgn(a^{F}_{zz}) \right) \right) 
    + a^F_{z, z-1} \sgn(a^{F}_{z-1, z-1})  \\
    &= a^F_{z z} \left(\xi - \left( d
    \SF(a^{F}_{zz}) - \sgn(a^{F}_{zz}) \right) \right) 
    -| a^F_{z, z-1} |.
\end{align*}

If $a^F_{z z} > 0$, then $p_z = d \SF(a^{F}_{zz}) = d$.
Since $\xi \neq p_z$, the inequality 
$0 \le \xi \le d-1$ is obtained. 
We have
\begin{align*}
    \sum_{k \in [\lambda]} a^F_{{z} k} y_k - b_{z}  
    &= a^F_{z z} \left(\xi - \left( d
    \SF(a^{F}_{zz}) - \sgn(a^{F}_{zz}) \right) \right) 
    -| a^F_{z, z-1} | \\
    &\le a^F_{z z} ( d-1 - ( d - 1 ) )
    -| a^F_{z, z-1} | 
    = -| a^F_{z, z-1} |  < 0 .
\end{align*}

If $a^F_{z z} < 0$, then $p_z = d \SF(a^{F}_{zz}) = 0$.
Since $\xi \neq p_z$, the inequality 
$1 \le \xi \le d$ is obtained.  
We have
\begin{align*}
    \sum_{k \in [\lambda]} a^F_{{z} k} y_k - b_{z}  
    &= a^F_{z z} \left(\xi - \left( d
    \SF(a^{F}_{zz}) - \sgn(a^{F}_{zz}) \right) \right) 
    -| a^F_{z, z-1} | \\
    &\le a^F_{z z} ( 1 - ( 0 + 1 ) ) -| a^F_{z, z-1} | 
    = -| a^F_{z, z-1} |  < 0 .
\end{align*}

For all vectors $y \in Y$, we obtain $y \notin R(A^F, b^F)$.
Hence, any neighborhood vertex of $p$ on $G(D^{[n]})$ is not a feasible solution.
Thus, $G(A^F,b^F)$ is not connected.
That is, $A^F$ is not universally connected.
By Lemma~\ref{lem:notJ->elm_able},
for all integers $j \in [m] \setminus J$, 
$A^I$ can be eliminated at the column $j$.
Thus, by Lemma~\ref{lem:expansion_lemma},
$A^I$ is not universally connected
since $A^F$ is not universally connected.
There exists a vector $b^I \in \mathbb{R}^{[\lambda]}$ 
such that $G(A^I,b^I)$ is not connected.
By defining $b$ by \eqref{eq:def_b_using_b^I},
we can prove that $G(A,b)$ is not connected
in the same way as in the case where $\lambda = 2$.
This completes the proof of Theorem~\ref{theorem:necessary}.

\section{Proof of Theorem~\ref{theorem:sufficient}}

In this section, we prove Theorem~\ref{theorem:sufficient}. 
In this proof, we need the following lemmas.

\begin{lemma}[{\cite[Lemma 3]{KIMURA201667}}, {\cite[Lemma 5.3]{KIMURA202188}}]\label{lem:EO->UC}
    Let $A$ be a matrix in $\mathbb{R}^{[m]\times [n]}$.
    If $A$ has an elimination ordering,
    then $A$ is universally connected.
\end{lemma}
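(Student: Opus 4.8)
The plan is to prove the statement by induction on the number of columns $n$. The base case $n \le 1$ is immediate: when $n = 0$ the graph $G(A,b)$ has at most one vertex, and when $n = 1$ the feasible set $R(A,b)$ is an interval of integers in $D$ whose consecutive elements are at Hamming distance one, so $G(A,b)$ is a path and hence connected.

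For the inductive step, fix an elimination ordering $S = (j_1, \ldots, j_n)$ of $A$ and write $j := j_1$. By definition $A$ can be eliminated at column $j$, so one of the two conditions of Definition~\ref{def:elimination} holds. Suppose first that condition~(i) holds: every row $i$ with $a_{ij} > 0$ satisfies $a_{ij'} = 0$ for all $j' \ne j$. Call such rows \emph{pure}; each pure row imposes on a feasible solution only the single-variable constraint $a_{ij} x_j \ge b_i$, i.e.\ a lower bound on $x_j$. Let $L$ be the smallest value in $D$ compatible with all pure rows (if no such value exists then $R(A,b) = \emptyset$ and there is nothing to prove). The key observation is that from any $x \in R(A,b)$ one may decrease $x_j$ one unit at a time down to $L$ while remaining feasible at every step: lowering $x_j$ relaxes every row with $a_{ij} < 0$ and leaves rows with $a_{ij} = 0$ untouched, while every pure row stays satisfied as long as $x_j \ge L$. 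Hence every vertex of $G(A,b)$ is connected, by a path that modifies only coordinate $j$, to a vertex whose $j$-th coordinate equals $L$.

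It therefore suffices to show that all feasible solutions with $x_j = L$ lie in one connected component. Fixing $x_j = L$ turns each pure row into the trivially satisfied inequality $0 \ge b_i - a_{ij} L$, and turns every remaining row into a constraint on the other coordinates with right-hand side $b'_i := b_i - a_{ij} L$. Thus a vector with $x_j = L$ is feasible for $(A,b)$ exactly when its restriction to the other coordinates is feasible for $(\elm(A,\{j\}), b')$. Now $(j_2, \ldots, j_n)$ is an elimination ordering of $\elm(A,\{j\})$, so by the induction hypothesis $\elm(A,\{j\})$ is universally connected and $G(\elm(A,\{j\}), b')$ is connected. Any path in this reduced graph lifts verbatim to a path in $G(A,b)$ that keeps $x_j = L$ fixed, since feasibility of the two systems corresponds exactly. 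This connects all solutions with $x_j = L$, completing case~(i). Case~(ii) is entirely symmetric: the pure rows are now those with $a_{ij} < 0$, they impose an upper bound on $x_j$, and one instead raises $x_j$ monotonically to the largest admissible value $U \in D$ before reducing.

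I expect the main obstacle to be the careful bookkeeping of the two reductions: verifying that the monotone adjustment of $x_j$ stays feasible at every intermediate integer value (which relies precisely on the elimination property isolating the sign-restricted rows to the single column $j$) and that fixing $x_j$ at its extreme value yields an exact equivalence between feasibility of $(A,b)$ and feasibility of the reduced system $(\elm(A,\{j\}), b')$. Once these two equivalences are in place, the induction runs cleanly.
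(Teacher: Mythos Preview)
The paper does not prove Lemma~\ref{lem:EO->UC}; it is quoted as a known result from \cite{KIMURA201667,KIMURA202188} without reproducing the argument. Your induction on $n$ is correct and is essentially the standard proof: peel off the first column $j_1$ of an elimination ordering, use the elimination condition to monotonically push $x_{j_1}$ to an extreme feasible value along a path in $G(A,b)$, and then invoke the induction hypothesis on $\elm(A,\{j_1\})$. The two verifications you flag as the main obstacles are exactly the right ones, and both go through as you describe: the sign condition in Definition~\ref{def:elimination} guarantees that only the pure rows can be tightened when $x_{j_1}$ moves toward its extreme value, and the pure rows contribute all-zero rows to $\elm(A,\{j_1\})$ with right-hand sides that are automatically satisfied once $x_{j_1}$ is fixed at $L$ (respectively $U$), so feasibility of the restricted vector for $(\elm(A,\{j_1\}),b')$ is equivalent to feasibility of the full vector for $(A,b)$.
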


\begin{lemma}[{\cite[Proposition~12]{shigenobu2024connectednesssolutionsintegerlinear}}\footnote{
        This statement is slightly different from \cite[Proposition~12]{shigenobu2024connectednesssolutionsintegerlinear}.
        However, its proof 
        actually proves this statement.
        }]\label{lem:m2_notEO->FP}
    Let $A = (a_{ij})$ be a matrix in $\mathbb{R}^{[m] \times [2]}$.
    Suppose that $A$ does not have an elimination ordering.
    There exist distinct integers $i_1,i_2 \in [m]$ such that, 
    for all integers $j \in [2]$, 
    $\sgn(a_{i_1 j}) = -\sgn(a_{i_2 j}) \neq 0$.
\end{lemma}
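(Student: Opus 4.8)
The plan is to exploit that with only $n=2$ columns, possessing an elimination ordering collapses to a single-column condition. First I would observe that any elimination ordering of $A$ is a permutation $(j_1,j_2)$ of $\{1,2\}$, and that its second step—eliminating $\elm(A,\{j_1\})$, a one-column matrix, at its sole remaining column—always succeeds, since both conditions in Definition~\ref{def:elimination} are vacuously satisfied when there is no other column $j'$. Hence $A$ has an elimination ordering if and only if $A$ can be eliminated at column $1$ or at column $2$. Consequently, the hypothesis that $A$ has no elimination ordering is equivalent to the statement that $A$ can be eliminated at \emph{neither} column.

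Next I would unpack each failure of Definition~\ref{def:elimination} into a statement about the sign patterns of the rows. A row can obstruct elimination only if both of its entries are nonzero, so I restrict attention to the four ``full'' sign patterns $(+,+),(+,-),(-,+),(-,-)$ and let $S$ denote the set of those that occur as $(\sgn(a_{i1}),\sgn(a_{i2}))$ for some $i$. Failure of condition (i) at column $1$ forces some row with $a_{i1}>0$ and $a_{i2}\neq 0$, i.e.\ $S\cap\{(+,+),(+,-)\}\neq\emptyset$; failure of (ii) at column $1$ gives $S\cap\{(-,+),(-,-)\}\neq\emptyset$; and the two analogous failures at column $2$ give $S\cap\{(+,+),(-,+)\}\neq\emptyset$ and $S\cap\{(+,-),(-,-)\}\neq\emptyset$. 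The desired conclusion is precisely that $S$ contains an opposite pair, namely $\{(+,+),(-,-)\}\subseteq S$ or $\{(+,-),(-,+)\}\subseteq S$; in either case the two rows realizing the pair are distinct (their patterns differ) and satisfy $\sgn(a_{i_1 j})=-\sgn(a_{i_2 j})\neq 0$ for both $j\in[2]$.

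Finally I would close the argument by a short Boolean case analysis on the four nonemptiness conditions. Assuming for contradiction that $S$ contains no opposite pair—so $(+,+)$ and $(-,-)$ are not both present, and $(+,-)$ and $(-,+)$ are not both present—I would split on which of $(+,+),(-,-)$ lies in $S$ (at most one does), and in each of the three subcases use the four conditions to force both $(+,-)$ and $(-,+)$ into $S$, contradicting the assumption. I expect this combinatorial bookkeeping to be entirely routine; the only real subtlety, and the step I would be most careful about, is the reduction in the first paragraph together with the observation that rows having a zero entry never obstruct elimination, which is what makes only the four full patterns relevant.
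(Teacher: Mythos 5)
Your proof is correct. Note that the paper itself gives no proof of this lemma---it imports the statement from an external reference (Proposition~12 of the cited arXiv paper, with a footnote)---so there is no internal argument to compare against; what you have produced is a self-contained, elementary proof. Your three steps all check out: (1) since an elimination ordering of a two-column matrix is a permutation $(j_1,j_2)$ of $\{1,2\}$ and the second step is vacuous (a one-column matrix can always be eliminated at its only column, as the quantification over $j'\in\emptyset$ is empty), having an elimination ordering is equivalent to being eliminable at column $1$ or at column $2$; (2) rows containing a zero entry never witness a failure of either condition in Definition~\ref{def:elimination}, so non-eliminability at both columns translates exactly into your four nonemptiness conditions on the set $S$ of full sign patterns $(+,+),(+,-),(-,+),(-,-)$ present among the rows; (3) the Boolean case analysis (split on which of $(+,+),(-,-)$ lies in $S$) correctly derives an opposite pair in every case, and two rows realizing distinct patterns are automatically distinct rows, giving $\sgn(a_{i_1 j})=-\sgn(a_{i_2 j})\neq 0$ for both $j$. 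This is a clean argument that makes the paper's reliance on the external citation unnecessary for this lemma.
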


First, we consider the case where at least one of (i) $m \le 3$,  
(ii) $n \le 2$ holds.
In order to prove this case, we prove the following Lemma~\ref{lem:m<3_and_n<2_not_EO->FP}.

\begin{lemma}\label{lem:m<3_and_n<2_not_EO->FP}
    Let $A = (a_{ij})$ be a matrix in $\mathbb{R}^{[m] \times [n]}$.
    Suppose that we have at least one of
    {\rm (i)} $m \le 3$, 
    {\rm (ii)} $n \le 2$.
    If $A$ does not have an elimination ordering,
    then $A$ has an a forbidden pattern.
\end{lemma}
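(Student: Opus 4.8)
The plan is to split along the two dimension hypotheses and, in each case, to exhibit explicitly either a forbidden pattern with $\lambda=2$ or one with $\lambda=3$. The case $n\le 2$ is quick. If $n\le 1$, then every column can be eliminated, since conditions (i) and (ii) of Definition~\ref{def:elimination} both hold vacuously when there is no other column; hence $A$ always has an elimination ordering and the hypothesis is vacuous, so we may assume $n=2$. Because $A$ has no elimination ordering, Lemma~\ref{lem:m2_notEO->FP} gives distinct rows $i_1,i_2$ with $\sgn(a_{i_1 j})=-\sgn(a_{i_2 j})\neq 0$ for both $j\in[2]$. Taking $I=\{i_1,i_2\}$ and $J=\{1,2\}$, both products $a_{i_1 j}a_{i_2 j}$ are negative, and the zero-condition in Definition~\ref{def:FP} is vacuous for $\lambda=2$; thus $(I,J)$ is a forbidden pattern.

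For the case $m\le 3$ I would first reduce to a ``stuck core''. Running Algorithm~\ref{alg} yields a set $\Delta$, and if $\Delta=\emptyset$ then the order in which columns entered $E$ is an elimination ordering; so from the hypothesis that $A$ has no elimination ordering we get $\Delta\neq\emptyset$. Let $A^r$ be the submatrix on the columns $\Delta$. By construction no column of $A^r$ can be eliminated, and any forbidden pattern of $A^r$ is automatically one of $A$, because the conditions in Definition~\ref{def:FP} only refer to entries in rows $I$ and columns $J$, which coincide in $A$ and $A^r$. The key consequence of non-eliminability is that for each column $j$ of $A^r$ both condition (i) and condition (ii) fail, which forces column $j$ to contain at least one positive and at least one negative entry.

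The combinatorial heart is the analysis of $A^r$ when $m=3$. Regard the (at most three) rows as vertices, and say that a column \emph{connects} an unordered pair of rows if it has strictly opposite signs there; by the previous paragraph every column connects at least one pair. If some pair is connected by two distinct columns, then those two columns together with the two rows form a forbidden pattern with $\lambda=2$, and we are done. Otherwise each pair is connected by at most one column, and I would extract the structure from non-eliminability in two steps. First, no column can have all three entries nonzero: such a column would connect exactly the two pairs containing its minority-sign row $r$, and non-eliminability would force $r$ to be nonzero in some other column, which (there being only three rows) must reconnect one of those two pairs, contradicting ``each pair connected at most once''; hence the column would instead be eliminable, a contradiction. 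Second, every remaining column is therefore a two-nonzero column connecting a single pair $\{a,b\}$, and non-eliminability forces both rows $a$ and $b$ to be nonzero in further columns; distinctness of connected pairs then forces those to be the columns connecting $\{a,w\}$ and $\{b,w\}$, where $w$ is the third row. Thus all three pairs are connected by three distinct two-nonzero columns whose sign pattern is exactly the $3\times 3$ cyclic pattern, giving a forbidden pattern with $\lambda=3$.

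I expect the main obstacle to lie in this last paragraph, specifically at the two points where non-eliminability must be converted into sign information: ruling out a fully nonzero column, and forcing all three row-pairs to be connected. Both require tracking simultaneously the failure of conditions (i) and (ii) and the standing assumption that no pair is connected twice, and getting these interactions exactly right (including the sign bookkeeping for which of rows $a,b$ is positive and which is negative) is where care is needed. By contrast, the reduction to $A^r$ and the $n\le 2$ case are routine once Lemma~\ref{lem:m2_notEO->FP} and the confluence of elimination are in hand.
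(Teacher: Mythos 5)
Your proof is correct, and its combinatorial core takes a genuinely different route from the paper's. The parts that coincide are the reduction and the easy cases: like the paper, you pass to the stuck core $A^r$ on the columns $\Delta$ of Algorithm~\ref{alg}, observe that a forbidden pattern of $A^r$ is automatically one of $A$, and settle $n\le 2$ via Lemma~\ref{lem:m2_notEO->FP}. The difference is in case (i) with $m=3$: the paper argues by cases on $|\Delta|$, applying Lemma~\ref{lem:m2_notEO->FP} to $A^r$ when $|\Delta|=2$, a pigeonhole argument on four columns versus three row-pairs when $|\Delta|\ge 4$, and, when $|\Delta|=3$, a lengthy sign-by-sign case analysis split on whether some column of $A^r$ is entirely nonzero. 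Your ``connection graph'' dichotomy replaces all three cases at once: either some row-pair receives opposite signs from two distinct columns, which is precisely a $\lambda=2$ forbidden pattern, or every pair is connected at most once, and then non-eliminability (each sign present in a column must be witnessed by a row that is nonzero in another column of $A^r$, the witness being forced when only one row carries that sign) first rules out columns with three nonzero entries and then produces three distinct columns connecting the three pairs $\{a,b\},\{a,w\},\{b,w\}$, i.e., the cyclic pattern and a $\lambda=3$ forbidden pattern. Your version is shorter, uniform in $|\Delta|$, needs Lemma~\ref{lem:m2_notEO->FP} only for case (ii), and makes it conceptually clear why exactly the patterns $\lambda=2$ and $\lambda=3$ arise; the paper's mechanical analysis has the separate virtue of rehearsing the technique that is needed anyway for Lemma~\ref{lem:unique_pattern_dicided}. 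One completeness point for a full write-up: under hypothesis (i) you detail only $m=3$, but $m=2$ with $n\ge 3$ is not covered by case (ii); it follows in one line from your framework (there is only one row-pair, and non-eliminability of any column of $A^r$ produces a second column connecting that same pair), while $m\le 1$ is vacuous since such matrices always admit an elimination ordering.
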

\begin{proof}
If $m = 2$, then there exist distinct integers $j_1, j_2 \in [n]$ such that
$a_{1 j_1} a_{2 j_1} < 0$ and $a_{1 j_2} a_{2 j_2} < 0$
by Definition~\ref{def:elimination}.
Thus, $(\{1,2\}, \{j_1 , j_2\})$ is a forbidden pattern in $A$.

We consider the case where $m =3$.
Define $A^r$ as the submatrix of $A$ 
whose index set of columns is $\Delta$ defined by Algorithm~\ref{alg}.
We consider the case where
$|\Delta| = 2$.
Notice that $|\Delta|$ is the number of columns in $A^r$.
By Lemma~\ref{lem:m2_notEO->FP},
there exist distinct integers $i_1, i_2 \in [m]$
such that $a_{i_1 \delta} a_{i_2 \delta} < 0$ for all integers $\delta \in \Delta$.
Thus, $(\{i_1, i_2\}, \Delta)$ is a forbidden pattern in $A$.

Assume that $|\Delta| \ge 4$.
We fix arbitrary distinct integers $j_1, j_2, j_3, j_4 \in \Delta$. 
Notice that $A^r$ cannot be eliminated at the columns $j_1, j_2, j_3, j_4$.
Thus, there exist distinct integers $i^k_1, i^k_2 \in [m]$ such that
$a_{i^k_1 j_k} a_{i^k_2 j_k} < 0$ for all integers $k \in [4]$.
Since the number of $2$-combinations of $3$ elements is $3$,
there exist distinct integers $s,t \in \{1,2,3,4\}$ such that
$\{i^s_1, i^s_2\} = \{i^t_1, i^t_2\}$.
Thus, $(\{i^s_1, i^s_2\}, \{j_s, j_t\})$ is a forbidden pattern in $A$.

Assume that $|\Delta| = 3$
and $\Delta = \{j_1, j_2, j_3\}$.
First, we assume that there exists an integer $j^{\circ} \in \Delta$
such that $a_{i j^{\circ}} \neq 0$ for all integers $i \in [3]$.
Without loss of generality, we assume that $j^{\circ} = j_1$.
Then (i) there exist distinct integers $i_1, i_2, i_3 \in [3]$
such that $a_{i_1 j_1} a_{i_2 j_1} < 0$ and $a_{i_3 j_1} \neq 0$,
and (ii) at least one of $a_{i_1 j_1} a_{i_3 j_1} < 0$,
$a_{i_2 j_1} a_{i_3 j_1} < 0$ holds.
Without loss of generality, we assume that $a_{i_1 j_1} a_{i_3 j_1} < 0$.
There exist distinct integers $i^{\prime}_1, i^{\prime}_2 \in [3]$
such that $a_{i^{\prime}_1 j_2} a_{i^{\prime}_2 j_2} < 0$.
If $\{i^{\prime}_1, i^{\prime}_2\} = \{i_1, i_2\}$
(resp. $\{i^{\prime}_1, i^{\prime}_2\} = \{i_1, i_3\}$),
then $(\{i_1, i_2\}, \{j_1, j_2\})$ (resp. $(\{i_1, i_3\}, \{j_1, j_2\})$)
is a forbidden pattern in $A$.
Hence, we assume that $\{i^{\prime}_1, i^{\prime}_2\} = \{i_2, i_3\}$.
There exist distinct integers $i^{\prime \prime}_1, i^{\prime \prime}_2 \in [3]$
such that $a_{i^{\prime \prime}_1 j_3} a_{i^{\prime \prime}_2 j_3} < 0$.
Similarly, if $\{i^{\prime \prime}_1, i^{\prime \prime}_2\} = \{i_1, i_2\}$
(resp. $\{i^{\prime \prime}_1, i^{\prime \prime}_2\} = \{i_1, i_3\}$),
then $(\{i_1, i_2\}, \{j_1, j_3\})$ (resp. $(\{i_1, i_3\}, \{j_1, j_3\})$)
is a forbidden pattern in $A$.
Hence, we assume that $\{i^{\prime \prime}_1, i^{\prime \prime}_2\} = \{i_2, i_3\}$.
Thus, $(\{i_2, i_3\}, \{j_2, j_3\})$ is a forbidden pattern in $A$.

Next, we assume that, for all integers $j \in \Delta$,
there exists an integer $i \in [3]$
such that $a_{i j} = 0$.
Without loss of generality, we assume that $a_{1 j_1} a_{2 j_1} < 0$
and $a_{3 j_1} = 0$.
There exist distinct integers $i_1, i_2, i_3 \in [3]$ 
such that $a_{i_1 j_2} a_{i_2 j_2} < 0$ 
and $a_{i_3 j_2} = 0$.
If $\{i_1, i_2\} = \{1,2\}$,
then $(\{1, 2\}, \{j_1 , j_2\})$ is a forbidden pattern in $A$.
Hence, we can assume that $\{i_1, i_2\} \neq \{1,2\}$.
Without loss of generality, we assume that $\{i_1, i_2\} = \{2,3\}$
and $i_3 = 1$.
There exist distinct integers $i^{\prime}_1, i^{\prime}_2, i^{\prime}_3 \in [3]$ 
such that $a_{i^{\prime}_1 j_3} a_{i^{\prime}_2 j_3} < 0$ 
and $a_{i^{\prime}_3 j_3} = 0$.
If $\{i^{\prime}_1, i^{\prime}_2\} = \{1,2\}$ 
(resp. $\{i^{\prime}_1, i^{\prime}_2\} = \{2,3\}$),
then $(\{1, 2\}, \{j_1 , j_3\})$ 
(resp. $(\{2, 3\}, \{j_2, j_3\})$) is a forbidden pattern in $A$.
Hence, we assume that $\{i^{\prime}_1, i^{\prime}_2\} \neq \{1,2\}$ 
and $\{i^{\prime}_1, i^{\prime}_2\} \neq \{2,3\}$.
That is, we assume that $\{i^{\prime}_1, i^{\prime}_2\} = \{1,3\}$ 
and $i^{\prime}_3 = 2$.
In this case, 
$(\{1,2,3\}, \{j_1 , j_2, j_3\})$ is a forbidden pattern in $A$.

We consider the case where $n \le 2$.
Lemma~\ref{lem:m2_notEO->FP} implies that 
there exist distinct integers $i_1,i_2 \in [m]$ such that
$a_{i_1 1} a_{i_2 1} < 0$ and $a_{i_1 2} a_{i_2 2} < 0$.
Then $(\{i_1, i_2\}, \{1, 2\})$ is a forbidden pattern in $A$.
This completes the proof.
\end{proof}

In the above cases (i) and (ii), 
the contraposition of Lemma~\ref{lem:m<3_and_n<2_not_EO->FP} implies that
if $A$ does not have a forbidden pattern,
then $A$ has an elimination ordering.
By Lemma~\ref{lem:EO->UC}, $A$ is universally connected.

Next, we consider the case where (iii) $m = 4$ and $n = 3$.
By Lemma~\ref{lem:EO->UC}, if $A$ has an elimination ordering,
then Theorem~\ref{theorem:sufficient} holds.
Thus, we can assume that $A$ does not have an elimination ordering.

\begin{lemma}\label{3x4_EO_and_not_elm->FP}
    Let $A = (a_{ij})$ be a matrix in $\mathbb{R}^{[4]\times [3]}$.
    Suppose that $A$ does not have an elimination ordering.
    If there exists an integer $j \in [3]$ such that $A$ can be eliminated at the column $j$,
    then $A$ has a forbidden pattern.
\end{lemma}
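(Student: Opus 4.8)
The plan is to reduce the three-column problem to a two-column one by deleting the eliminable column. Since $A$ can be eliminated at column $j$, I would first form the submatrix $A' := \elm(A,\{j\})$, which lies in $\mathbb{R}^{[4] \times ([3]\setminus\{j\})}$ and hence has exactly two columns.

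The key step is to argue that $A'$ has no elimination ordering. I would prove this by contradiction: suppose $(k_1,k_2)$ is an elimination ordering of $A'$, where $\{k_1,k_2\} = [3]\setminus\{j\}$. Because the eliminated matrices compose, i.e. $\elm(A,\{j\}) = A'$ and $\elm(A,\{j,k_1\}) = \elm(A',\{k_1\})$, I can prepend $j$ to obtain the sequence $(j,k_1,k_2)$, and each of its prefixes can be eliminated at the next column: $A$ at $j$ by hypothesis, and the remaining two steps because $(k_1,k_2)$ is an elimination ordering of $A'$. Hence $(j,k_1,k_2)$ would be an elimination ordering of $A$, contradicting the hypothesis that $A$ has none. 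Therefore $A'$ has no elimination ordering.

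Now $A'$ has two columns and no elimination ordering, so I would invoke Lemma~\ref{lem:m2_notEO->FP}: there exist distinct rows $i_1,i_2$ with $\sgn(a_{i_1 k}) = -\sgn(a_{i_2 k}) \neq 0$ for both columns $k \in [3]\setminus\{j\}$. Writing $[3]\setminus\{j\} = \{j_1,j_2\}$, this yields $a_{i_1 j_1} a_{i_2 j_1} < 0$ and $a_{i_1 j_2} a_{i_2 j_2} < 0$. I would then conclude that $(\{i_1,i_2\},\{j_1,j_2\})$ is a forbidden pattern of size $\lambda = 2$ in $A$: the sign condition of Definition~\ref{def:FP} is exactly these two inequalities, and the zero-pattern condition is vacuous because $I \setminus \{i_\ell, i_{\ell+1}\} = \emptyset$ when $\lambda = 2$.

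I do not expect a genuine obstacle here; the only point requiring care is verifying that eliminating $j$ first is legitimate, i.e. that the eliminated matrices compose so that an elimination ordering of $A'$ extends to one of $A$. I also note that the hypothesis $m = 4$ is never used — the same argument works for any number of rows — so the entire content of the lemma is the column count dropping from three to two, after which the known $m \times 2$ result finishes the job.
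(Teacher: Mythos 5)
Your proposal is correct and follows essentially the same route as the paper: delete the eliminable column $j$, observe that $\elm(A,\{j\})$ inherits the lack of an elimination ordering, apply Lemma~\ref{lem:m2_notEO->FP} to the resulting $4\times 2$ matrix, and read off a size-$2$ forbidden pattern. The only difference is that you spell out the extension argument (prepending $j$ to an elimination ordering of $A'$) that the paper leaves implicit, and your observation that the hypothesis $m=4$ is never used is accurate.
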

\begin{proof}
Fix an integer $j \in [3]$ such that $A$ can be eliminated at the column $j$.
Define $A^{\prime} := \elm(A, \{j\})$.
By assumption, $A$ does not have an elimination ordering.
Hence, $A^{\prime}$ does not have an elimination ordering.
By Lemma~\ref{lem:m2_notEO->FP},
there exist distinct integers $i_1, i_2 \in [3]$ 
such that $a_{i_1 j^{\prime}} a_{i_2 j^{\prime}} < 0$ 
for all integers $j^{\prime} \in [3] \setminus \{j\}$.
Thus, $(\{i_1, i_2\}, [3] \setminus \{j\})$ is a forbidden pattern.
\end{proof}

Lemma~\ref{3x4_EO_and_not_elm->FP} implies that
if $A$ does not have an elimination ordering or a forbidden pattern,
then $A$ cannot be eliminated at any column of $A$.
The following lemma implies that $A$ has the unique sign pattern in this case.

\begin{lemma}\label{lem:unique_pattern_dicided}
    Let $A=(a_{ij})$ be a matrix in $\mathbb{R}^{[4] \times [3]}$.
    Suppose that $A$ cannot be eliminated at any column of $A$.
    If $A$ does not have a forbidden pattern,
    then $A$ can be transformed to a matrix having the following sign pattern
\begin{equation}
    \left(
    \begin{matrix}
        + & + & 0 \\
        + & - & 0\\
        - & 0 & + \\
        - & 0 & - 
    \end{matrix}
    \right) \label{eq:matrix_of_not_FP_and_not_EO}
\end{equation}
    by reordering the columns and the rows,
    where $+$ means a positive number and $-$ means a negative number.
\end{lemma}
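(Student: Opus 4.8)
The plan is to reduce the statement to a finite classification of the sign pattern of $A$, organized around the combinatorics of when two rows \emph{conflict}. Say that rows $i,i'$ conflict in column $j$ if $a_{ij}a_{i'j}<0$. Unwinding Definition~\ref{def:FP} for $\lambda=2$ shows that $A$ has a forbidden pattern of size $2$ precisely when some two rows conflict in at least two columns; since $A$ has no forbidden pattern, \emph{every pair of rows conflicts in at most one column}. Likewise, a forbidden pattern of size $3$ (recall $n=3$) is exactly a triple of rows together with all three columns such that each of the three pairs in the triple conflicts in a distinct one of the three columns, the remaining entry of that column being $0$. Reading the non-elimination hypothesis through Definition~\ref{def:elimination}, every column contains both a positive and a negative entry, each lying in a row with at least two nonzero entries. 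I will use freely that row permutations, column permutations, and sign flips of whole columns all preserve both hypotheses and send \eqref{eq:matrix_of_not_FP_and_not_EO} to a matrix of the same form (indeed, negating any single column of \eqref{eq:matrix_of_not_FP_and_not_EO} can be undone by a suitable permutation), so it suffices to reach that pattern up to these operations.

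The engine of the proof is the following \emph{three-row principle}: if some three rows are such that each of the three columns contains both signs among them, then $A$ has a forbidden pattern. To see this, assign to each column the set of pairs of the three rows that conflict in it; each set is nonempty, and the no-size-$2$ condition forces the three sets to be pairwise disjoint subsets of the three available pairs. Hence each column conflicts in exactly one pair and these three pairs are distinct; in particular each column then has exactly one positive, one negative and one zero entry on the triple, which is precisely a size-$3$ forbidden pattern. I will apply this repeatedly to forbid most degree profiles of the rows.

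Using this, I would first rule out rows of low degree. If some row is zero or has a single nonzero entry, then the non-elimination hypothesis forces all three columns to carry both signs among the remaining three rows (the low-degree row cannot serve as a non-isolated witness), and the three-row principle produces a forbidden pattern; the case of two or more zero rows is immediate. Thus every row has degree $2$ or $3$. To exclude a row of degree $3$, normalize it by column sign flips to $(+,+,+)$; then every other row conflicts with it in at most one column, hence carries at most one negative entry, and since each column still needs a negative the three remaining rows must place their unique negatives in the three distinct columns, i.e.\ in a perfect matching. A short enumeration of the remaining $\{0,+\}$ entries, constrained by the pairwise no-conflict and the non-isolation conditions, leaves only configurations in which the three matched rows already form a size-$3$ forbidden pattern. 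Hence every row has degree exactly $2$.

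With all rows of degree $2$ there are eight nonzero entries, so the column-degree multiset is either $(4,2,2)$ or $(3,3,2)$; a further short conflict analysis eliminates $(3,3,2)$. In the surviving case one column is fully nonzero and the other two have disjoint supports of size $2$; the no-conflict condition forces the two rows sharing each small column to agree in the full column, which pins the full column to the pattern $(+,+,-,-)$ and, after normalizing the two small columns by sign flips, yields exactly \eqref{eq:matrix_of_not_FP_and_not_EO} up to reordering. I expect the degree-$3$ elimination to be the main obstacle: unlike the other steps it is not dispatched by a single application of the three-row principle but requires the matching observation together with a finite check, and one must verify carefully that the sign-flip normalizations used there are genuinely absorbed by row and column permutations of the target pattern.
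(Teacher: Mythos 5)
Your proposal is correct, but it takes a genuinely different route from the paper's proof. The paper proves Lemma~\ref{lem:unique_pattern_dicided} by a direct entry-by-entry enumeration: it fixes a few entries up to symmetry using the symbolic signs $\otimes_j,\odot_j$, then runs a nested case distinction (Cases 1, 1.1, 1.2.1, \dots, 2.2.2.3) in which every branch either exhibits a forbidden pattern, contradicts non-eliminability, or terminates in the pattern~\eqref{eq:matrix_of_not_FP_and_not_EO}. You instead extract structural invariants first: the translation of ``no size-$2$ pattern'' into ``every pair of rows conflicts in at most one column,'' the three-row principle (whose proof via pairwise-disjoint nonempty conflict sets inside a three-element set is correct, including the point that an all-nonzero column would create two conflicting pairs), the exclusion of rows of degree $\le 1$ and of degree $3$, and finally the column-degree-multiset split $(4,2,2)$ versus $(3,3,2)$. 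I verified the two steps you leave as ``short'' checks and both close: in the degree-$3$ case the matching argument plus the non-isolation constraints admits exactly two sign assignments, each a size-$3$ forbidden pattern on the three matched rows; and in the $(3,3,2)$ case the row supports are forced to be $\{1,2\},\{1,2\},\{1,3\},\{2,3\}$, after which every admissible signing makes one of the two full-support rows form a size-$3$ forbidden pattern together with the $\{1,3\}$- and $\{2,3\}$-rows. You also correctly handle the one point where your method needs more than the statement allows: column sign flips are not among the permitted operations, but you note that both hypotheses are invariant under them (forbidden patterns and eliminability depend only on within-column sign products and zero patterns) and that negating any single column of~\eqref{eq:matrix_of_not_FP_and_not_EO} is undone by row and column permutations, so the orbit argument is sound; the paper achieves the same economy with the $\otimes_j/\odot_j$ device instead. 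As for what each approach buys: yours is shorter, explains \emph{why} the pattern is forced (degree, conflict, and matching constraints rather than accumulated case bookkeeping), and is the more promising template for attacking larger $m$ and $n$; the paper's enumeration needs no auxiliary combinatorial observations and every branch is mechanically checkable, which makes it easier to audit line by line.
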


We give the proof of Lemma~\ref{lem:unique_pattern_dicided} 
in Section~\ref{subsec:proof_of_lemma}.
The following lemma implies that a matrix having the 
sign pattern~\eqref{eq:matrix_of_not_FP_and_not_EO} is universally connected.
The proof of Lemma~\ref{lem:connected_with_unique_pattern_in_general} is 
the same as {\cite[Lemma 1]{shigenobuCOCOA2023}}.
For completeness, we give its proof. 
Thus, this completes the proof of Theorem~\ref{theorem:sufficient}.

\begin{lemma}\label{lem:connected_with_unique_pattern_in_general}
    Let $A$ be a matrix in $\mathbb{R}^{[4] \times [3]}$ having the following form
    \begin{equation}
    A = 
    \left(
    \begin{matrix}
        p_{1 1} & p_{1 2} & 0 \\
        p_{2 1} & q_{2 2} & 0 \\
        q_{3 1} & 0 & p_{3 3} \\
        q_{4 1} & 0 & q_{4 3}
    \end{matrix}
    \right), \label{eq:matrix_of_not_FP _and_not_EO_with_unique_pattern}
    \end{equation}
where $p_{1 1}, p_{1 2}, p_{2 1}, p_{3 3} > 0$ and $q_{2 2}, q_{3 1}, q_{4 1}, q_{4 3} < 0$.
Then $A$ is universally connected.
\end{lemma}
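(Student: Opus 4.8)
The plan is to show that for any right-hand side vector $b \in \mathbb{R}^{[4]}$, the solution graph $G(A,b)$ is connected. The structure of $A$ in~\eqref{eq:matrix_of_not_FP _and_not_EO_with_unique_pattern} is special: column $2$ interacts only with rows $1,2$ (and column $1$), while column $3$ interacts only with rows $3,4$ (and column $1$). So the four inequalities decouple into two blocks once we fix the value of $x_1$. Concretely, rows $1$ and $2$ read $p_{11}x_1 + p_{12}x_2 \ge b_1$ and $p_{21}x_1 + q_{22}x_2 \ge b_2$, which constrain $x_2$ given $x_1$; rows $3$ and $4$ read $q_{31}x_1 + p_{33}x_3 \ge b_3$ and $q_{41}x_1 + q_{43}x_3 \ge b_4$, which constrain $x_3$ given $x_1$. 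Since $p_{12}>0$ and $q_{22}<0$, for each fixed $x_1$ the feasible values of $x_2$ form a contiguous interval in $D$ (an interval bounded below by row $1$ and above by row $2$); likewise, since $p_{33}>0$ and $q_{43}<0$, the feasible values of $x_3$ for each fixed $x_1$ form a contiguous interval in $D$.

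**The key steps**, carried out in order. First I would fix an arbitrary feasible solution and show it can be moved to a canonical form by single-coordinate steps. The natural canonical target is the feasible solution that, for its value of $x_1$, takes the extreme admissible value of $x_2$ and $x_3$ (say the smallest feasible $x_2$ and the smallest feasible $x_3$). Because the feasible $x_2$-values form a contiguous interval for fixed $x_1$, one can decrease $x_2$ one step at a time, staying feasible throughout, until hitting the lower end; the same argument moves $x_3$ to its lower end. This shows every feasible solution with a given $x_1$ is connected to the canonical solution at that $x_1$. Second, I would show that as $x_1$ ranges over its feasible values, these canonical solutions are themselves connected. Here one uses that the admissible intervals for $x_2$ and $x_3$ vary monotonically and overlap as $x_1$ changes by one: the effect of $A$ not being eliminable at any column, combined with the sign pattern, forces the feasible region to be ``staircase-connected'' in the $x_1$ direction. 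I would argue that if $x_1$ and $x_1+1$ are both feasible (i.e., both extend to a complete feasible solution), then the canonical solutions at $x_1$ and $x_1+1$ can be joined by first adjusting $x_2$ and $x_3$ into the common admissible range and then changing $x_1$ by one.

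**The main obstacle** I anticipate is the second step: proving that consecutive feasible values of $x_1$ yield overlapping admissible ranges for $x_2$ and $x_3$, so that the $x_1$-coordinate can actually be changed by a single step without leaving the feasible region. This requires a careful case analysis of how the bounds on $x_2$ (from rows $1,2$) and on $x_3$ (from rows $3,4$) shift when $x_1$ increases by one, using the signs of the entries. Since $p_{11},p_{21}>0$, increasing $x_1$ loosens row $1$ but tightens row $2$, moving the admissible $x_2$-interval; since $q_{31},q_{41}<0$, increasing $x_1$ tightens row $3$ but loosens row $4$. The crucial point is that these shifts are by bounded amounts and the intervals remain nonempty and overlapping for the range of feasible $x_1$, which follows because the set of feasible $x_1$ values is itself an interval. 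I would handle this by identifying the feasible $x_1$-values as exactly those for which both the $x_2$-interval and the $x_3$-interval are nonempty, observing this is a contiguous set, and then using the monotonic movement of interval endpoints to exhibit a common value of $x_2$ (and of $x_3$) feasible for both $x_1$ and $x_1+1$. Once this overlap is established, the single $x_1$-step is legal, and combining with the first step completes the connectivity argument for arbitrary $b$.
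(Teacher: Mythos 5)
Your decomposition of the system into an $(x_1,x_2)$-block (rows $1,2$) and an $(x_1,x_3)$-block (rows $3,4$) is the right structural observation, but the critical step of your plan rests on a sign analysis that is incorrect, and the justification you offer for the overlap property would not hold. You write that increasing $x_1$ ``loosens row $1$ but tightens row $2$'' and ``tightens row $3$ but loosens row $4$.'' In fact $p_{11}$ and $p_{21}$ are \emph{both} positive, so increasing $x_1$ loosens both rows $1$ and $2$; and $q_{31}$ and $q_{41}$ are \emph{both} negative, so increasing $x_1$ tightens both rows $3$ and $4$. Consequently, writing $I_2(x_1)$ for the set of values of $x_2 \in D$ satisfying rows $1,2$ together with $x_1$, and $I_3(x_1)$ analogously for rows $3,4$, these intervals do not ``move'': they are nested, $I_2(x_1) \subseteq I_2(x_1+1)$ and $I_3(x_1+1) \subseteq I_3(x_1)$. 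This nestedness is the only thing that makes an overlap argument work. Under the shifting picture you describe, your appeal to ``shifts by bounded amounts'' is false and fatal: the entries are arbitrary reals, so a genuinely shifting interval could be displaced past its entire width when $x_1$ changes by one; then no common value of $x_2$ exists for consecutive $x_1$, and no single-coordinate step can ever change $x_1$. Indeed, a block whose interval truly shifts (opposite signs in column $1$ within the block) is exactly a forbidden pattern such as $(\{1,2\},\{1,2\})$, and such matrices are \emph{not} universally connected; so any proof of this lemma must use the aligned signs in column $1$, which your sketch gets backwards.

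With the nestedness corrected, your staircase argument (canonical solutions, then walking $x_1$ one unit at a time) does go through, though note that edges of the solution graph allow changing one coordinate by an arbitrary amount, so the unit-step bookkeeping is unnecessary. The paper exploits the same nestedness in a single stroke: given feasible $s,t$ with, say, $s_1 \ge t_1$, the three-edge path $s \to (s_1,t_2,s_3) \to (t_1,t_2,s_3) \to t$ stays feasible, because replacing $s_2$ by $t_2$ while the first coordinate is still the larger value $s_1$ keeps rows $1$ and $2$ satisfied (using $p_{11},p_{21}>0$ and feasibility of $t$), and then lowering the first coordinate from $s_1$ to $t_1$ while the third coordinate is still $s_3$ keeps rows $3$ and $4$ satisfied (using $q_{31},q_{41}<0$ and feasibility of $s$). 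No induction over $x_1$, no canonical solutions, and no interval bookkeeping are needed.
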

\begin{proof}
Fix an arbitrary vector $b \in \mathbb{R}^{[4]}$.
If $|R(A,b)| < 2$,
then $G(A,b)$ is clearly connected.
Assume that $R(A,b)$ contains at least two solutions.
We take arbitrary vectors $s,t \in R(A,b)$.
Without loss of generality, we assume that $s_1 \ge t_1$.
Let $P$ be the path from $s$ to $t$ defined by
\[
    s = 
    \begin{pmatrix}
        s_1 \\
        s_2 \\
        s_3
    \end{pmatrix}
    \to
    u^1 = 
    \begin{pmatrix}
        s_1 \\
        t_2 \\
        s_3
    \end{pmatrix}
    \to
    u^2 = 
    \begin{pmatrix}
        t_1 \\
        t_2 \\
        s_3
    \end{pmatrix}
    \to
    t = 
    \begin{pmatrix}
        t_1 \\
        t_2 \\
        t_3
    \end{pmatrix}.
\]
Then we have $u^1 \in R(A,b)$
because 
\begin{align*}
    p_{1 1} s_1 + p_{1 2} t_2  \ge p_{1 1} t_1 + p_{1 2} t_2  \ge b_1, ~~
    &q_{3 1} s_1 + p_{3 3} s_3  \ge b_3, \\
    p_{2 1} s_1 + q_{2 2} t_2  \ge p_{2 1} t_1 + q_{2 2} t_2  \ge b_2, ~~
    &q_{4 1} s_1 + q_{4 3} s_3  \ge b_4.
\end{align*}
Furthermore, we have $u^2 \in R(A,b)$
because 
\begin{align*}
    p_{1 1} t_1 + p_{1 2} t_2  \ge b_1, ~~
    &q_{3 1} t_1 + p_{3 3} s_3  \ge q_{3 1} s_1 + p_{3 3} s_3 \ge b_3, \\
    p_{2 1} t_1 + q_{2 2} t_2  \ge b_2, ~~
    &q_{4 1} t_1 + q_{4 3} s_3  \ge q_{4 1} s_1 + q_{4 3} s_3 \ge b_4.
\end{align*}
These imply that  
every vertex of $P$ is contained in 
$R(A,b)$, and $G(A,b)$ is connected.
This completes the proof.
\end{proof}

\subsection{Proof of Lemma~\ref{lem:unique_pattern_dicided}}
\label{subsec:proof_of_lemma}
Throughout this subsection, 
for all integers $j \in [3]$, 
let $\otimes_j$, $\odot_j$ be symbols 
satisfying the condition that 
$\{ \otimes_j, \odot_j \} = \{+, -\}$.
Notice that,
for all distinct integers $j_1, j_2 \in [3]$ and all symbols $\square \in \{\otimes, \odot\}$,
there is a possibility that $\square_{j_1} \neq \square_{j_2}$.
In what follows, for all integers $i \in [4]$ and $j \in [3]$ 
and all symbols $\square_j \in \{\otimes_j, \odot_j\}$, if we write $a_{ij} = \square_j$,
then this means that the sign of $a_{ij}$ is $\square_j$. 
Similarly, $a_{ij} \neq \square_j$ means that the sign of $a_{ij}$ is not $\square_j$.

Since $A$ cannot be eliminated at any column,
there exist integers $i_1 \in [4]$ and $j \in \{2,3\}$
such that $a_{i_1 1} = \otimes_{1}$ and $a_{i_1 j} \neq 0$.
Without loss of generality, we can assume that 
$a_{i_1 j} = \otimes_{j}$.
Furthermore, without loss of generality, we can assume that $i_1 = 1$ and $j = 2$.
Since $A$ cannot be eliminated at any column,
there exists an integer $i_2 \in \{ 2,3,4 \}$
such that $a_{i_2 1} = \odot_{1}$.
Without loss of generality, we can assume that $i_2 = 2$.
Similarly, there exists an integer $i^{\prime}_2 \in \{ 2,3,4 \}$
such that $a_{i^{\prime}_2 2} = \odot_{2}$.
If $i^{\prime}_2 = 2$, then $(\{1,2\}, \{1,2\})$ is a forbidden pattern in $A$.
Hence, by assumption, $i^{\prime}_2 \neq 2$.
Thus, without loss of generality, we can assume that $i^{\prime}_2 = 3$,
i.e., $A$ has the following form
\begin{equation*}
\begin{pmatrix}
    \otimes_1 & \otimes_2 & \quad \\
    \odot_1   &           &       \\
    ~         & \odot_2   &       \\
    ~         &           & 
\end{pmatrix}, 
\end{equation*}
where blanks mean that the sign patterns are not determined.
First, we consider $a_{3 1}$.
Notice that if $a_{3 1} = \odot_{1}$, 
then $(\{1,3\}, \{1,2\})$ is a forbidden pattern.
First, we perform a case distinction on whether 
$a_{3 1} = \otimes_{1}$ or $a_{3 1} = 0$.

\textbf{Case 1.}
Assume that $a_{3 1} = \otimes_{1}$.
If $a_{2 2} = \otimes_{2}$,
then $( \{2,3\}, \{1,2\})$ is a forbidden pattern.
If $a_{2 2} = \odot_{2}$,
then $( \{1,2\}, \{1,2\})$ is a forbidden pattern.
Hence, $a_{2 2} = 0$.
\[
\begin{pmatrix}
    \otimes_1 & \otimes_2 & \quad \\
    \odot_1   & 0         &       \\
    \otimes_1 & \odot_2   &       \\
              &           & 
\end{pmatrix}.
\]
Next, we perform a case distinction on whether $a_{2 3} = 0$ or $a_{2 3} \neq 0$.

\textbf{Case 1.1.}
Assume that $a_{2 3} = 0$.
If $a_{4 1} \neq \odot_{1}$,
then $A$ can be eliminated at the column $1$.
Hence, $a_{4 1} = \odot_{1}$.
\[
\begin{pmatrix}
    \otimes_1 & \otimes_2 & \quad \\
    \odot_1   & 0                & 0        \\
    \otimes_1 & \odot_2   &  \\
    \odot_1   &           & 
\end{pmatrix}.
\]
If $a_{4 2} = 0$ and $a_{4 3} = 0$,
then $A$ can be eliminated at the column $1$.
Hence, we have at least one of $a_{4 2} \neq 0$,  $a_{4 3} \neq 0$.
If $a_{4 2} = \otimes_{2}$,
then $( \{3,4\}, \{1,2\})$ is a forbidden pattern.
If $a_{4 2} = \odot_{2}$,
then $( \{1,4\}, \{1,2\})$ is a forbidden pattern.
Hence, $a_{4 2} = 0$.
Thus, we can assume that $a_{4 3} = \otimes_{3}$.
\[
    \begin{pmatrix}
    \otimes_1 & \otimes_2 &   \\
    \odot_1   & 0         & 0               \\
    \otimes_1 & \odot_2   &   \\
    \odot_1   & 0         & \otimes_3
    \end{pmatrix}. 
\]
If $a_{1 3} = \odot_{3}$,
then $( \{1,4\}, \{1,3\})$ is a forbidden pattern.
Furthermore, if $a_{3 3} = \odot_{3}$,
then $( \{3,4\}, \{1,3\})$ is a forbidden pattern.
These mean that $A$ can be eliminated at the column $3$.
Thus, Case 1.1 does not happen.

\textbf{Case 1.2.}
Assume that $a_{2 3} \neq 0$.
Without loss of generality, 
we can assume that $a_{2 3} = \otimes_{3}$.
\[
    \begin{pmatrix}
    \otimes_1 & \otimes_2 &   \\
    \odot_1   & 0         & \otimes_3 \\
    \otimes_1 & \odot_2   &   \\
      &   & 
    \end{pmatrix}. 
\]
If $a_{1 3} = \odot_{3}$,
then $( \{1,2\}, \{1,3\})$ is a forbidden pattern.
If $a_{3 3} = \odot_{3}$,
then $( \{2,3\}, \{1,3\})$ is a forbidden pattern.
Hence, $a_{4 3} = \odot_{3}$.
\[
    \begin{pmatrix}
    \otimes_1 & \otimes_2 &   \\
    \odot_1   & 0         & \otimes_3 \\
    \otimes_1 & \odot_2   &   \\
      &   & \odot_3  
    \end{pmatrix}. 
\]
If $a_{4 1} = \otimes_{1}$,
then $( \{2,4\}, \{1,3\})$ is a forbidden pattern.
Next, we perform a case distinction on whether $a_{4 1} = 0$ or $a_{4 1} = \odot_{1}$.

\textbf{Case 1.2.1.}
Assume that $a_{4 1} = 0$.
\[
    \begin{pmatrix}
    \otimes_1 & \otimes_2 &   \\
    \odot_1   & 0         & \otimes_3 \\
    \otimes_1 & \odot_2   &   \\
    0        &   & \odot_3  
    \end{pmatrix}. 
\]
We perform a case distinction on whether
$a_{4 2} = \otimes_{2}$, or $a_{4 2} = \odot_{2}$, or $a_{4 2} = 0$.
Assume that $a_{4 2} = \otimes_{2}$.
\[
    \begin{pmatrix}
    \otimes_1 & \otimes_2 &   \\
    \odot_1   & 0         & \otimes_3 \\
    \otimes_1 & \odot_2   &   \\
    0         & \otimes_2 & \odot_3  
    \end{pmatrix}. 
\]
If $a_{3 3} = \otimes_{3}$,
then $( \{3,4\}, \{2,3\})$ is a forbidden pattern.
If $a_{3 3} = \odot_{3}$,
then $( \{2,3\}, \{1,3\})$ is a forbidden pattern.
If $a_{3 3} = 0$,
then $( \{2,3,4\}, \{1,2,3\})$ is a forbidden pattern.
Hence, in Case 1.2.1, $a_{4 2} \neq \otimes_{2}$.
Assume that $a_{4 2} = \odot_{2}$.
\[
    \begin{pmatrix}
    \otimes_1 & \otimes_2 &   \\
    \odot_1   & 0         & \otimes_3 \\
    \otimes_1 & \odot_2   &   \\
    0         & \odot_2   & \odot_3  
    \end{pmatrix}. 
\]
If $a_{1 3} = \otimes_{3}$,
then $( \{1,4\}, \{2,3\})$ is a forbidden pattern.
If $a_{1 3} = \odot_{3}$,
then $( \{1,2\}, \{1,3\})$ is a forbidden pattern.
If $a_{1 3} = 0$,
then $( \{1,2,4\}, \{1,3,2\})$ is a forbidden pattern.
Hence, in Case 1.2.1, $a_{4 2} \neq \odot_{2}$.
Assume that $a_{4 2} = 0$.
\[
    \begin{pmatrix}
    \otimes_1 & \otimes_2 &   \\
    \odot_1   & 0         & \otimes_3 \\
    \otimes_1 & \odot_2   &   \\
    0         & 0         & \odot_3  
    \end{pmatrix}. 
\]
If $a_{1 3} \neq \odot_{3}$ and $a_{3 3} \neq \odot_{3}$,
then $A$ can be eliminated at the column $3$.
Hence, we have at least one of $a_{1 3} = \odot_{3}$,  $a_{3 3} = \odot_{3}$.
If $a_{1 3} = \odot_{3}$,
then $( \{1,2\}, \{1,3\})$ is a forbidden pattern.
If $a_{3 3} = \odot_{3}$,
then $( \{2,3\}, \{1,3\})$ is a forbidden pattern.
Hence, in Case 1.2.1, $a_{4 2} \neq 0$.
Thus, Case 1.2.1 does not happen.

\textbf{Case 1.2.2.}
Assume that $a_{4 1} = \odot_{1}$.
\[
    \begin{pmatrix}
    \otimes_1 & \otimes_2 &   \\
    \odot_1   & 0        & \otimes_3 \\
    \otimes_1 & \odot_2   &   \\
    \odot_1   &   & \odot_3  
    \end{pmatrix}. 
\]
If $a_{4 2} = \otimes_{2}$,
then $( \{3,4\}, \{1,2\})$ is a forbidden pattern.
Furthermore, if $a_{4 2} = \odot_{2}$,
then $( \{1,4\}, \{1,2\})$ is a forbidden pattern.
Hence, $a_{4 2} = 0$.
\[
    \begin{pmatrix}
    \otimes_1 & \otimes_2 &   \\
    \odot_1   & 0         & \otimes_3 \\
    \otimes_1 & \odot_2   &   \\
    \odot_1   & 0         & \odot_3  
    \end{pmatrix}. 
\]
If $a_{1 3} = \otimes_{3}$,
then $( \{1,4\}, \{1,3\})$ is a forbidden pattern.
If $a_{1 3} = \odot_{3}$,
then $( \{1,2\}, \{1,3\})$ is a forbidden pattern.
Hence, $a_{1 3} = 0$.
If $a_{3 3} = \otimes_{3}$,
then $( \{3,4\}, \{1,3\})$ is a forbidden pattern.
If $a_{3 3} = \odot_{3}$,
then $( \{2,3\}, \{1,3\})$ is a forbidden pattern.
Hence, $a_{3 3} = 0$.
\[
    \begin{pmatrix}
    \otimes_1 & \otimes_2 & 0        \\
    \odot_1   & 0         & \otimes_3 \\
    \otimes_1 & \odot_2   & 0        \\
    \odot_1   & 0         & \odot_3  
    \end{pmatrix}. 
\]
No matter how $\otimes_j, \, \odot_j$ are replaced with $+, \, -$
for all integers $j \in [3]$,
this matrix can be transformed to the matrix~\eqref{eq:matrix_of_not_FP_and_not_EO}
by reordering the columns and the rows.

\textbf{Case 2.}
Assume that $a_{3 1} = 0$.
\[
    \begin{pmatrix}
    \otimes_1 & \otimes_2 & \quad \\
    \odot_1   &   &  \\
    0         & \odot_2   &  \\
      &   & 
    \end{pmatrix}.
\]
Next, we perform a case distinction on whether $a_{3 3} = 0$ or $a_{3 3} \neq 0$.

\textbf{Case 2.1.}
Assume that $a_{3 3} = 0$.
If $a_{2 2} = \odot_{2}$,
then $( \{1,2\}, \{1,2\})$ is a forbidden pattern.
Hence, $a_{4 2} = \odot_{2}$.
\[
    \begin{pmatrix}
    \otimes_1 & \otimes_2 & \quad \\
    \odot_1   &           &  \\
    0         & \odot_2   & 0        \\
    ~         & \odot_2   & 
    \end{pmatrix}.
\]
If $a_{4 1} = \odot_{1}$,
then $( \{1,4\}, \{1,2\})$ is a forbidden pattern.
Next, we perform a case distinction on whether $a_{4 1} = \otimes_{1}$ or $a_{4 1} = 0$.

\textbf{Case 2.1.1.}
Assume that $a_{4 1} = \otimes_{1}$.
\[
    \begin{pmatrix}
    \otimes_1 & \otimes_2 & \quad \\
    \odot_1   &           &  \\
    0         & \odot_2   & 0        \\
    \otimes_1 & \odot_2   & 
    \end{pmatrix}.
\]
If $a_{2 2} = \otimes_{2}$,
then $( \{2,4\}, \{1,2\})$ is a forbidden pattern.
If $a_{2 2} = \odot_{2}$,
then $( \{1,2\}, \{1,2\})$ is a forbidden pattern.
Hence, $a_{2 2} = 0$.
If $a_{2 3} = 0$,
then $A$ can be eliminated at the column $1$.
Hence, $a_{2 3} \neq 0$.
Without loss of generality,
we can assume that $a_{2 3} = \otimes_{3}$.
\[
    \begin{pmatrix}
    \otimes_1 & \otimes_2 &   \\
    \odot_1   & 0         & \otimes_3 \\
    0         & \odot_2   & 0        \\
    \otimes_1 & \odot_2   & 
    \end{pmatrix}.
\]
If $a_{1 3} = \odot_{3}$,
then $( \{1,2\}, \{1,3\})$ is a forbidden pattern.
If $a_{4 3} = \odot_{3}$,
then $( \{2,4\}, \{1,3\})$ is a forbidden pattern.
Hence, Case 2.1.1 does not happen.

\textbf{Case 2.1.2.}
Assume that $a_{4 1} = 0$.
If $a_{4 3} = 0$,
then $A$ can be eliminated at the column $2$.
Hence, without loss of generality, we can assume that $a_{4 3} = \otimes_{3}$.
\[
    \begin{pmatrix}
    \otimes_1 & \otimes_2 &   \\
    \odot_1   &           &   \\
    0         & \odot_2   & 0        \\
    0         & \odot_2   & \otimes_3
    \end{pmatrix}.
\]
If $a_{1 3} = \odot_{3}$,
then $( \{1,4\}, \{2,3\})$ is a forbidden pattern.
Hence, $a_{1 3} \neq \odot_{3}$.
If $a_{2 3} \neq \odot_{3}$,
then $A$ can be eliminated at the column $3$.
Hence, $a_{2 3} = \odot_{3}$.
\[
    \begin{pmatrix}
    \otimes_1 & \otimes_2 &   \\
    \odot_1   &           & \odot_3   \\
    0         & \odot_2   & 0        \\
    0         & \odot_2   & \otimes_3
    \end{pmatrix}.
\]
If $a_{2 2} = \otimes_{2}$,
then $( \{2,4\}, \{2,3\})$ is a forbidden pattern.
If $a_{2 2} = \odot_{2}$,
then $( \{1,2\}, \{1,2\})$ is a forbidden pattern.
Hence, in Case 2.1.2, $a_{2 2} = 0$.
If $a_{1 3} = \otimes_{3}$,
then $( \{1,2\}, \{1,3\})$ is a forbidden pattern.
If $a_{1 3} = \odot_{3}$,
then $( \{1,4\}, \{2,3\})$ is a forbidden pattern.
If $a_{1 3} = 0$,
then $( \{1,2,4\}, \{1,3,2\})$ is a forbidden pattern.
Hence, in Case 2.1, $a_{4 1} \neq 0$.
Thus, Case 2.1 does not happen.

\textbf{Case 2.2.}
Assume that $a_{3 3} \neq 0$.
Without loss of generality,
we can assume that $a_{3 3} = \otimes_{3}$.
\[
    \begin{pmatrix}
    \otimes_1 & \otimes_2 &   \\
    \odot_1   &           &   \\
    0         & \odot_2   & \otimes_3\\
    ~         &           &  
    \end{pmatrix}.
\]
If $a_{2 2} = \odot_{2}$,
then $( \{1,2\}, \{1,2\})$ is a forbidden pattern.
Next, we perform a case distinction on whether $a_{2 2} = \otimes_{2}$ or $a_{2 2} = 0$.

\textbf{Case 2.2.1.}
Assume that $a_{2 2} = \otimes_{2}$.
\[
    \begin{pmatrix}
    \otimes_1 & \otimes_2 &   \\
    \odot_1   & \otimes_2 &   \\
    0         & \odot_2   & \otimes_3 \\
      &   &  
    \end{pmatrix}.
\] 
If $a_{1 3} = \odot_{3}$,
then $( \{1,3\}, \{2,3\})$ is a forbidden pattern.
If $a_{2 3} = \odot_{3}$,
then $( \{2,3\}, \{2,3\})$ is a forbidden pattern.
Hence, in Case 2.2.1, $a_{4 3} = \odot_{3}$.
\[
    \begin{pmatrix}
    \otimes_1 & \otimes_2 &   \\
    \odot_1   & \otimes_2 &   \\
    0         & \odot_2   & \otimes_3 \\
      &                   & \odot_3  
    \end{pmatrix}.
\]
If $a_{4 2} = \otimes_{2}$,
then $( \{3,4\}, \{2,3\})$ is a forbidden pattern.
Next, we perform a case distinction on whether $a_{4 2} = \odot_{2}$ or $a_{4 2} = 0$.

\textbf{Case 2.2.1.1}
Assume that $a_{4 2} = \odot_{2}$.
\[
    \begin{pmatrix}
    \otimes_1 & \otimes_2 &   \\
    \odot_1   & \otimes_2 &   \\
    0         & \odot_2   & \otimes_3 \\
    ~         & \odot_2   & \odot_3  
    \end{pmatrix}.
\]
If $a_{4 1} = \otimes_{1}$,
then $( \{2,4\}, \{1,2\})$ is a forbidden pattern.
If $a_{4 1} = \odot_{1}$,
then $( \{1,4\}, \{1,2\})$ is a forbidden pattern.
Hence, in Case 2.2.1.1, $a_{4 1} = 0$.
If $a_{1 3} = \otimes_{3}$,
then $( \{1,4\}, \{2,3\})$ is a forbidden pattern.
If $a_{1 3} = \odot_{3}$,
then $( \{1,3\}, \{2,3\})$ is a forbidden pattern.
Hence, in Case 2.2.1.1, $a_{1 3} = 0$.
If $a_{2 3} = \otimes_{3}$,
then $( \{2,4\}, \{2,3\})$ is a forbidden pattern.
If $a_{2 3} = \odot_{3}$,
then $( \{2,3\}, \{2,3\})$ is a forbidden pattern.
Hence, in Case 2.2.1.1, $a_{2 3} = 0$.
\[
    \begin{pmatrix}
    \otimes_1 & \otimes_2 & 0         \\
    \odot_1   & \otimes_2 & 0         \\
    0         & \odot_2   & \otimes_3 \\
    0         & \odot_2   & \odot_3  
    \end{pmatrix}.
\]
No matter how $\otimes_j, \, \odot_j$ are replaced with $+, \, -$
for all integers $j \in [3]$,
this matrix can be transformed to the matrix~\eqref{eq:matrix_of_not_FP_and_not_EO}
by reordering the columns and the rows.

\textbf{Case 2.2.1.2.}
Assume that $a_{4 2} = 0$.
\[
    \begin{pmatrix}
    \otimes_1 & \otimes_2 &   \\
    \odot_1   & \otimes_2 &   \\
    0         & \odot_2   & \otimes_3 \\
    ~         & 0         & \odot_3  
    \end{pmatrix}.
\]
We perform a case distinction on whether
$a_{4 1} = \otimes_{1}$, or $a_{4 1} = \odot_{1}$, or $a_{4 1} = 0$.

\textbf{Case 2.2.1.2.1.}
Assume that $a_{4 1} = \otimes_{1}$.
\[
    \begin{pmatrix}
    \otimes_1 & \otimes_2 &   \\
    \odot_1   & \otimes_2 &   \\
    0         & \odot_2   & \otimes_3 \\
    \otimes_1 & 0         & \odot_3  
    \end{pmatrix}.
\]
If $a_{2 3} = \otimes_{3}$,
then $( \{2,4\}, \{1,3\})$ is a forbidden pattern.
If $a_{2 3} = \odot_{3}$,
then $( \{2,3\}, \{2,3\})$ is a forbidden pattern.
If $a_{2 3} = 0$,
then $( \{2,3,4\}, \{1,2,3\})$ is a forbidden pattern.
Hence, Case 2.2.1.2.1 does not happen.

\textbf{Case 2.2.1.2.2.}
Assume that $a_{4 1} = \odot_{1}$.
\[
    \begin{pmatrix}
    \otimes_1 & \otimes_2 &   \\
    \odot_1   & \otimes_2 &   \\
    0         & \odot_2   & \otimes_3 \\
    \odot_1   & 0         & \odot_3  
    \end{pmatrix}.
\]
If $a_{1 3} = \otimes_{3}$,
then $( \{1,4\}, \{1,3\})$ is a forbidden pattern.
If $a_{1 3} = \odot_{3}$,
then $( \{1,3\}, \{2,3\})$ is a forbidden pattern.
If $a_{1 3} = 0$,
then $( \{1,3,4\}, \{1,2,3\})$ is a forbidden pattern.
Hence, Case 2.2.1.2.2 does not happen.

\textbf{Case 2.2.1.2.3.}
Assume that $a_{4 1} = 0$.
\[
    \begin{pmatrix}
    \otimes_1 & \otimes_2 &   \\
    \odot_1   & \otimes_2 &   \\
    0         & \odot_2   & \otimes_3 \\
    0         & 0         & \odot_3  
    \end{pmatrix}.
\]
If $a_{1 3} = \odot_{3}$,
then $( \{1,3\}, \{2,3\})$ is a forbidden pattern.
If $a_{2 3} = \odot_{3}$,
then $( \{2,3\}, \{2,3\})$ is a forbidden pattern.
This implies that $A$ can be eliminated at the column $3$.
Case 2.2.1.2.3
does not happen.
Thus, Case 2.2.1.2 does not happen.

\textbf{Case 2.2.2.}
Assume that $a_{2 2} = 0$.
\[
    \begin{pmatrix}
    \otimes_1 & \otimes_2 &   \\
    \odot_1   & 0         &   \\
    0         & \odot_2   & \otimes_3 \\
    ~         &   &  
    \end{pmatrix}.
\]
We perform a case distinction on whether
$a_{2 3} = \otimes_{3}$, or $a_{2 3} = \odot_{3}$, or $a_{2 3} = 0$.

\textbf{Case 2.2.2.1.}
Assume that $a_{2 3} = \otimes_{3}$.
\[
    \begin{pmatrix}
    \otimes_1 & \otimes_2 &   \\
    \odot_1   & 0         & \otimes_3 \\
    0         & \odot_2   & \otimes_3 \\
      &   &  
    \end{pmatrix}.
\]
If $a_{1 3} = \odot_{3}$,
then $( \{1,3\}, \{2,3\})$ is a forbidden pattern.
Hence, $a_{4 3} = \odot_{3}$.
\[
    \begin{pmatrix}
    \otimes_1 & \otimes_2 &   \\
    \odot_1   & 0         & \otimes_3 \\
    0         & \odot_2   & \otimes_3 \\
    ~         &           & \odot_3  
    \end{pmatrix}.
\]
If $a_{4 2} = \otimes_{2}$,
then $( \{3,4\}, \{2,3\})$ is a forbidden pattern.
Next, we perform a case distinction on whether $a_{4 2} = \odot_{2}$ or $a_{4 2} = 0$.

\textbf{Case 2.2.2.1.1.}
Assume that $a_{4 2} = \odot_{2}$.
\[
    \begin{pmatrix}
    \otimes_1 & \otimes_2 &   \\
    \odot_1   & 0         & \otimes_3 \\
    0         & \odot_2   & \otimes_3 \\
    ~         & \odot_2   & \odot_3  
    \end{pmatrix}.
\]
If $a_{4 1} = \otimes_{1}$,
then $( \{2,4\}, \{1,3\})$ is a forbidden pattern.
If $a_{4 1} = \odot_{1}$,
then $( \{1,4\}, \{1,2\})$ is a forbidden pattern.
Hence, $a_{4 1} = 0$.
\[
    \begin{pmatrix}
    \otimes_1 & \otimes_2 &   \\
    \odot_1   & 0         & \otimes_3 \\
    0         & \odot_2   & \otimes_3 \\
    0         & \odot_2   & \odot_3  
    \end{pmatrix}.
\] 
If $a_{1 3} = \otimes_{3}$,
then $( \{1,4\}, \{2,3\})$ is a forbidden pattern.
If $a_{1 3} = \odot_{3}$,
then $( \{1,3\}, \{2,3\})$ is a forbidden pattern.
If $a_{1 3} = 0$,
then $( \{1,2,4\}, \{1,2,3\})$ is a forbidden pattern.
Hence, Case 2.2.2.1.1 does not happen.

\textbf{Case 2.2.2.1.2.}
Assume that $a_{4 2} = 0$.
\[
    \begin{pmatrix}
    \otimes_1 & \otimes_2 &   \\
    \odot_1   & 0         & \otimes_3 \\
    0         & \odot_2   & \otimes_3 \\
    ~         & 0         & \odot_3  
    \end{pmatrix}.
\]
If $a_{4 1} = \otimes_{1}$,
then $( \{2,4\}, \{1,3\})$ is a forbidden pattern.
Assume that $a_{4 1} = \odot_{1}$.
\[
    \begin{pmatrix}
    \otimes_1 & \otimes_2 &   \\
    \odot_1   & 0         & \otimes_3 \\
    0         & \odot_2   & \otimes_3 \\
    \odot_1   & 0         & \odot_3  
    \end{pmatrix}.
\]
If $a_{1 3} = \otimes_{3}$,
then $( \{1,4\}, \{1,3\})$ is a forbidden pattern.
If $a_{1 3} = \odot_{3}$,
then $( \{1,3\}, \{2,3\})$ is a forbidden pattern.
If $a_{1 3} = 0$,
then $( \{1,3,4\}, \{1,2,3\})$ is a forbidden pattern.
Hence, in Case 2.2.2.1.2, $a_{4 1} \neq \odot_{1}$.
Assume that $a_{4 1} = 0$.
\[
    \begin{pmatrix}
    \otimes_1 & \otimes_2 &   \\
    \odot_1   & 0         & \otimes_3 \\
    0         & \odot_2   & \otimes_3 \\
    0         & 0         & \odot_3  
    \end{pmatrix}.
\] 
If $a_{1 3} = \odot_{3}$,
then $( \{1,2\}, \{1,3\})$ is a forbidden pattern.
Hence, $a_{1 3} \neq \odot_{3}$.
This implies that $A$ can be eliminated at the column $3$.
Hence, $a_{4 1} \neq 0$.
Hence, Case 2.2.2.1.2 does not happen.
Thus, Case 2.2.2.1 does not happen.

\textbf{Case 2.2.2.2.}
Assume that $a_{2 3} = \odot_{3}$.
\[
    \begin{pmatrix}
    \otimes_1 & \otimes_2 &   \\
    \odot_1   & 0         & \odot_3   \\
    0         & \odot_2   & \otimes_3 \\
      &   &  
    \end{pmatrix}.
\]
If $a_{1 3} = \otimes_{3}$,
then $( \{1,2\}, \{1,3\})$ is a forbidden pattern.
If $a_{1 3} = \odot_{3}$,
then $( \{1,3\}, \{2,3\})$ is a forbidden pattern.
If $a_{1 3} = 0$,
then $( \{1,2,3\}, \{1,2,3\})$ is a forbidden pattern.
Hence, Case 2.2.2.2 does not happen.

\textbf{Case 2.2.2.3.}
Assume that $a_{2 3} = 0$.
\[
    \begin{pmatrix}
    \otimes_1 & \otimes_2 &   \\
    \odot_1   & 0         & 0        \\
    0         & \odot_2   & \otimes_3 \\
    ~         &           &  
    \end{pmatrix}.
\]
If $a_{4 1} \neq \odot_{1}$,
then $A$ can be eliminated at the column $1$.
Hence, $a_{4 1} = \odot_{1}$.
\[
    \begin{pmatrix}
    \otimes_1 & \otimes_2 &   \\
    \odot_1   & 0         & 0        \\
    0         & \odot_2   & \otimes_3 \\
    \odot_1   &   &  
    \end{pmatrix}.
\]
If $a_{1 3} \neq \odot_{3}$ and $a_{4 3} \neq \odot_{3}$,
then $A$ can be eliminated at the column $3$.
Hence, we have at least one of $a_{1 3} = \odot_{3}$,  $a_{4 3} = \odot_{3}$.
If $a_{1 3} = \odot_{3}$,
then $( \{1,3\}, \{2,3\})$ is a forbidden pattern.
Hence, $a_{4 3} = \odot_{3}$.
\[
    \begin{pmatrix}
    \otimes_1 & \otimes_2 &   \\
    \odot_1   & 0         & 0        \\
    0         & \odot_2   & \otimes_3 \\
    \odot_1   &           & \odot_3  
    \end{pmatrix}.
\] 
If $a_{4 2} = \otimes_{2}$,
then $( \{3,4\}, \{2,3\})$ is a forbidden pattern.
If $a_{4 2} = \odot_{2}$,
then $( \{1,4\}, \{1,2\})$ is a forbidden pattern.
Hence, $a_{4 2} = 0$.
\[
    \begin{pmatrix}
    \otimes_1 & \otimes_2 &   \\
    \odot_1   & 0         & 0        \\
    0         & \odot_2   & \otimes_3 \\
    \odot_1   & 0         & \odot_3  
    \end{pmatrix}.
\]
If $a_{1 3} = \otimes_{3}$,
then $( \{1,4\}, \{1,3\})$ is a forbidden pattern.
If $a_{1 3} = \odot_{3}$,
then $( \{1,3\}, \{2,3\})$ is a forbidden pattern.
If $a_{1 3} = 0$,
then $( \{1,3,4\}, \{1,2,3\})$ is a forbidden pattern.
Hence, Case 2.2.2.3 does not happen.
Thus, Case 2.2.2 does not happen. 

\section{Conclusion}
We give a necessary and sufficient condition 
for the connectedness of the solution graph of an integer linear system
in some special cases.
This results is slightly stronger than the
necessary and sufficient condition proposed by 
Shigenobu and Kamiyama~\cite{shigenobuCOCOA2023}.
An apparent future work is to determine whether 
our necessary condition is also sufficient condition in general. 

\bibliographystyle{plain}
\bibliography{ILS_forbidden_pattern}

\end{document}